\theoremstyle{plain}
\newtheorem{theorem}{Theorem}
\newtheorem{lemma}[theorem]{Lemma}
\newtheorem{definition}[theorem]{Definition}
\newtheorem{corollary}[theorem]{Corollary}
\theoremstyle{definition}
\newtheorem{example}{Example}
\newtheorem{remark}{Remark}
\begin{document}

\title{On the Number of Subsequences in the Nonbinary Deletion Channel}

\author{Han Li,~Xiang Wang, and ~Fang-Wei Fu
\thanks{X. Wang is with the School of Mathematics, Statistics and Mechanics, Beijing University of Technology, Beijing, 100124, China (e-mail: xwang@bjut.edu.cn). Han Li is with the Chern Institute of Mathematics and LPMC, Nankai University, Tianjin 300071, China (e-mail: hli@mail.nankai.edu.cn). F.-W. Fu is with the Chern Institute of Mathematics and LPMC, Nankai University, Tianjin 300071, China (e-mail: fwfu@nankai.edu.cn).}}




\maketitle

\begin{abstract}
In the deletion channel, an important problem is to determine the number of subsequences derived from a string $U$ of length $n$ when subjected to $t$ deletions. It is well-known that the number of subsequences in this setting exhibits a strong dependence on the number of runs in the string $U$, where a run is defined as a maximal substring of identical characters. In this paper we study the number of subsequences of non-binary strings in this scenario, and propose some improved bounds on  the number of subsequences of $r$-run non-binary strings. Specifically,  we characterize a family of $r$-run non-binary strings with the 
maximum number of subsequences under any $t$ deletions, and show that this number can be computed in polynomial time.
\end{abstract}

\begin{IEEEkeywords}
Channel coding, deletion channels, deletion correcting codes.
\end{IEEEkeywords}

\section{Introduction}
\IEEEPARstart{L}{et} $X\in \Sigma_q^n$ be a $q$-ary string with length $n$, and let $t\leq n$ be a parameter, where $\Sigma_q=\{0,1,...,q-1\}$. Let $\Sigma_q^n$ be the set of all length-$n$ $q$-ary sequences. Let $\Sigma_q^*$ be the set of all $q$-ary sequences, that is, $\Sigma_q^*=\bigcup\limits_{n=0}^{\infty} \Sigma_q^n$.  
A sequence $Y \in \Sigma_q^{n-t}$ is called a $t$-subsequence of $X$ if it is derived  from $X$ by deleting exactly $t$ symbols. The \emph{deletion ball} of radius $t$ centered at $X \in \Sigma_q^n$ is defined as the collection of $t$-subsequences of $X$, denoted by $D_t(X)$. Formally, for any $X\in \Sigma_q^n$, we define
\begin{equation}
D_t(X) = \{Y\in \Sigma_q^{n-t} | Y \text{ is a $t$-subsequence of } X\}.\nonumber
\end{equation}
In deletion channels, the set $D_t(X)$ and its size have been widely studied in the design and analysis of communication schemes \cite{Mitzenmacher, Mercier2, Kash, L0} and in the sequence reconstruction problem \cite{Zhang, Gabrys, L1, L2, Pham1, Pham2}. It’s challenging to analyze $D_t(X)$ because
$\left|D_t(X)\right|$  depends not only on its length $n$ and the number $t$ of deletions, but also strongly on its structure.  Moreover, the size of ball has also been studied for deletions and insertions channel \cite{Sun, Wang, Lan, Bar, He}.

In 1966, Levenshtein \cite{L0} first proved that $\left|D_t(X)\right|$ depends on the number of runs in the string $X$. A \emph{run} is a maximal substring consisting of identical symbols, and we denote the number of runs in a string $X \in \Sigma_q^n$ as $r(X)$. 
Specifically, Levenshtein \cite{L0} showed  that
\begin{align*}
\binom{r(X)-t+1}{t}\leq \left|D_t(X)\right|\leq \binom{r(X)+t-1}{t} 
\end{align*}
for any $X\in \Sigma_q^n$. Let $\bm{\sigma} = (\sigma_1, \dots, \sigma_q)$ be a permutation of $\Sigma_q$. We define the periodic sequence $C_q(n, \bm{\sigma}) = c_1 \dots c_n$ such that $c_i = \sigma_i$ for $1 \le i \le q$ and $c_i = c_{i-q}$ for $i > q$. Let $\mathcal{C}_q(n)$ be the set of all such sequences for all possible permutations $\bm{\sigma}$.  Calabi and Hartnett \cite{Calabi} improved the upper bound on $\left|D_t(X)\right|$ and showed that the maximal value of $\left|D_t(X)\right|$, denoted by $D_q(n,t)$, is attained by arbitrary strings of the form $C_q(n,\bm{\sigma}) \in \mathcal{C}_q(n)$. Moreover, \cite{Calabi} gave a recursive expression for $D_t(C_q(n,\bm{\sigma}))$, thereby obtaining the bound
\begin{align*}
\binom{r(X)-t+1}{t}\leq \left|D_t(X)\right|\leq \left|D_t(C_q(n,\bm{\sigma}))\right|=\sum\limits_{i=0}^{q-1} D_q(n-i-1,t-i)
\end{align*}
for any $X\in \Sigma_q^n$. Hirschberg and Regnier \cite{Hirschberg} proposed an improved lower bound and an explict upper bound on $\left|D_t(X)\right|$ as follows
\begin{align*}
\sum\limits_{i=0}^{t}\binom{r(X)-t}{i}\leq \left|D_t(X)\right|\leq \left|D_t(C_q(n,\bm{\sigma}))\right|=\sum\limits_{i=0}^{t}\binom{n-t}{i}D_{q-1}(t,t-i),
\end{align*}
where $D_1(n,t)=1$ if $n \geq t \geq 0$ and $D_q(n,t)=0$ otherwise. Throughout this paper, let $S(x_1, \dots, x_r; a_1, \dots, a_r)$ denote an $r$-run string, where the $i$-th run has length $x_i \ge 1$ and consists of the symbol $a_i \in \Sigma_q$ for $i \in [r]$. For instance, $S(1, 2, 3; 0, 1, 2)$ represents $011222$. We use the notation $n \times m$ to denote $n$ consecutive runs, each of length $m$. For example, $S(2, 3 \times 1, 3; 1, 3, 2,1,0)= 11321000$. Liron and Langberg \cite{Liron} refined the upper and lower bounds on $|D_t(X)|$ for any binary string $X \in \Sigma_2^n$ with $r$ runs as follows:
\begin{equation*}
|D_t(A_{r,n})| \le |D_t(X)| \le |D_t(B_{r,k})|,
\end{equation*}
where  $A_{r,n} = S((r-1)\times 1, n-r+1; 0,1,\dots, (r-1) \bmod 2)$ is the unbalanced string, and  $B_{r,k} = S(r\times k; 0,1, \dots, (r-1) \bmod 2)$ is the balanced string with uniform run lengths $k=n/r$ (assuming $r \mid n$).  Moreover, the authors derived both recursive expressions and closed-form expressions for these bounds. Mercier et al. \cite{Mercier1} studied the setting of small values for $t$ and presented explicit formulas for $\left|D_t(X)\right|$ for $t\leq 5$ and $X\in \Sigma_q^n$.

\subsection{Our main contributions}
In this paper, we generalize results in \cite{Liron} to the $q$-ary setting. 

To establish the lower bound, we employ certain string operations to map $X$ to a corresponding alternating binary string  that preserves the run-type structure of $X$. Crucially, we prove that these operations never increase the number of subsequences through the transformation. Consequently, $|D_t(X)|$ is lower-bounded by the number of subsequences of the resulting binary string, allowing us to leverage the known binary lower bound from \cite{Liron} to derive our result for the $q$-ary setting. 

Regarding the upper bound, specifically for the case where $r \mid n$, we identify a class of extremal strings, referred to as the $q$-ary balanced strings. Any such string, represented by $S(r \times k; c_1, \dots, c_r)$ with constant run lengths $k = n/r$ and a periodic sequence  $c_1 \dots c_r \in \mathcal{C}_q(r)$, is shown to maximize the number of subsequences. By extending the structural analysis in \cite{Liron}, we derive an explicit recursive expression and a corresponding closed-form formula for the number of subsequences within these balanced strings. Based on this characterization, we first establish a tight upper bound for $r \mid n$ in  Theorem \ref{thm2}, and subsequently derive a universal upper bound for arbitrary $n$ and $r$ in Corollary \ref{cor1}. Although the latter may not be tight in all scenarios, it significantly refines the best-known results in the existing literature for the general $q$-ary setting. Finally, based on the explicit expression for the number of subsequences in these balanced strings, we perform an asymptotic analysis by extending the method presented in \cite{Liron}.

\subsection{Organization of this Paper}

The structure of the remainder of the paper is as follows. In Section~\ref{sec2}, we recall our notation and  some foundational results. In Section~\ref{sec3}, we introduce several basic string operations and use the known results on the number of subsequences for binary strings to establish a lower bound for general $q$-ary strings. In Section \ref{sec4}, we provide a family of balanced strings $\subseteq \Sigma_q^n$  which have the largest number of subsequences under deletion for any given  number of runs $r$ and deletions $t$. In Section \ref{sec5},  we derive an upper bound on the number of subsequences  for any given $q,n$, number of runs $r$ and deletions $t$, and compare it with  previously best-known results. Section  \ref{sec6} concludes the paper.

\section{Definitions and Preliminaries}
\label{sec2}

We denote by  $[n]$ the set $\{1,2,...,n-1,n\}$. Let $S_q$ be the set of all permutations over $\Sigma_q$. We use lower-case letters to denote scalars and uppercase letters to denote vectors or sequences. Let $X=x_1x_2\ldots x_n\in \Sigma_q^n$. We let $X_{[i,j]}$  be the projection of $X$ onto the index interval $[i,j]$, i.e., $X_{[i,j]}=x_i\ldots x_j$. 
Let $\mathcal{C} \subset \Sigma_q^n$ be a set and let $U$ be a sequence of length at most $n$.  We denote by $\mathcal{C}^{U}$ the subset of $\mathcal{C}$ comprising all sequences  that start with $U$. Similarly,  $\mathcal{C}_{U}$ is the subset of $\mathcal{C}$ comprising all sequences  that end with $U$.
Furthermore, for two sequences $U_1$, $U_2$ of length at most $n$, we define $\mathcal{C}_{U_2}^{U_1}$ as the set of sequences in $\mathcal{C}$ that start with $U_1$ and end with $U_2$.  For two $q$-ary sequences $U=u_1\ldots u_m,V=v_1\ldots v_n$, we denote $UV=u_1\ldots u_m v_1\ldots v_n$ as the concatenation of $U$ and $V$.  For a sequence $U$ and a set $\mathcal{C}$, the set $U  \circ \mathcal{C}$ is defined by prepending $U$ to every sequence in $\mathcal{C}$,
\[
U  \circ \mathcal{C} =\{U C~|~C\in \mathcal{C}\}.
\]
Similarly, the set $\mathcal{C} \circ U$ is defined by appending $U$ to every sequence in $\mathcal{C}$.

The total number of subsequences can be determined by partitioning the subsequence set into disjoint subsets based on their unique prefixes or suffixes. Since the cardinality of a set is additive over its mutually disjoint components, the structural properties established in Lemmas 1 and 2 of \cite{Gabrys} for binary strings naturally extend to general $q$-ary sequences for any $q \ge 2$, as formally stated in the following lemmas.

\begin{lemma}
\label{lm1}
Let $n,m_1,m_2,q$ be positive integers such that $m_1+m_2\leq n-t$. Then, for any $U\in \Sigma_q^n$ and $q\geq2$ we have
\begin{equation}
\left|D_t(U)\right|=\sum\limits_{U_1\in \Sigma_q^{m_1}}\sum\limits_{U_2\in \Sigma_q^{m_2}}\left|D_t(U)_{U_2}^{U_1}\right|.\nonumber
\end{equation}
\end{lemma}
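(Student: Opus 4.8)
The plan is to realize the double sum as counting the blocks of a partition of the deletion ball $D_t(U)$, and then to conclude by additivity of cardinality over a disjoint union. Recall that every $Y\in D_t(U)$ is a string of length exactly $n-t$. Since $m_1,m_2\geqslant 1$ and $m_1+m_2\leqslant n-t$, the first $m_1$ coordinates and the last $m_2$ coordinates of such a $Y$ lie in the disjoint index windows $[1,m_1]$ and $[n-t-m_2+1,\,n-t]$. Hence each $Y$ determines a well-defined prefix $P(Y):=Y_{[1,m_1]}\in\Sigma_q^{m_1}$ and a well-defined suffix $S(Y):=Y_{[n-t-m_2+1,\,n-t]}\in\Sigma_q^{m_2}$, read off from non-overlapping positions.

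First I would make the indexing explicit: by the definition of $D_t(U)_{U_2}^{U_1}$, a string $Y\in D_t(U)$ belongs to $D_t(U)_{U_2}^{U_1}$ if and only if $P(Y)=U_1$ and $S(Y)=U_2$. Thus the assignment $Y\mapsto(P(Y),S(Y))$ sends $Y$ to the unique index pair $(U_1,U_2)\in\Sigma_q^{m_1}\times\Sigma_q^{m_2}$ for which $Y\in D_t(U)_{U_2}^{U_1}$.

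Next I would verify the two defining properties of a partition. For coverage, every $Y\in D_t(U)$ satisfies $Y\in D_t(U)_{S(Y)}^{P(Y)}$, so $D_t(U)=\bigcup_{U_1,U_2}D_t(U)_{U_2}^{U_1}$. For disjointness, if $(U_1,U_2)\neq(U_1',U_2')$, then no $Y$ can have $(P(Y),S(Y))$ equal to both pairs simultaneously, because a fixed string has a unique length-$m_1$ prefix and a unique length-$m_2$ suffix; hence $D_t(U)_{U_2}^{U_1}\cap D_t(U)_{U_2'}^{U_1'}=\emptyset$. Summing cardinalities over the disjoint blocks then yields the claimed identity.

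I do not expect a genuine obstacle here, as the statement is a bookkeeping identity and, as noted in the excerpt, the argument is insensitive to the alphabet size, so the binary proof of Lemmas~1 and~2 in \cite{Gabrys} carries over verbatim for all $q\geqslant 2$. The only point requiring attention is the hypothesis $m_1+m_2\leqslant n-t$: it is precisely what forces the prefix and suffix windows to be disjoint, so that the pair $(P(Y),S(Y))$ classifies each $Y$ consistently and the blocks $D_t(U)_{U_2}^{U_1}$ form an honest partition rather than an overlapping cover.
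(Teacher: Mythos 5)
Your proof is correct and follows exactly the route the paper relies on: the paper does not write out a proof but simply observes that $D_t(U)$ is partitioned into the mutually disjoint sets $D_t(U)_{U_2}^{U_1}$ according to the length-$m_1$ prefix and length-$m_2$ suffix of each subsequence (citing Lemmas 1 and 2 of \cite{Gabrys}), which is precisely your partition argument made explicit. The only minor remark is that for this lemma the disjointness of the two index windows is not actually needed for the partition to be valid (a string has a unique prefix and suffix regardless); the hypothesis $m_1+m_2\leq n-t$ becomes essential in Lemma~\ref{lm2}.
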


\begin{lemma}
\label{lm2}
Let $n,m_1,m_2,t,q$ be positive integers such that $m_1+m_2\leq n-t$. Let $U=u_1...u_n\in \Sigma_q^n, U_1\in \Sigma_q^{m_1},$ and $U_2\in \Sigma_q^{m_2}$ with $q\geq 2$.
Assume that $k_1$ is the smallest integer such that $U_1$ is a subsequence of $u_1...u_{k_1}$ and $k_2$ is the largest integer such that $U_2$ is a subsequence of $u_{k_2}...u_n$. If $k_1<k_2$, then
\begin{equation}
D_t(U)_{U_2}^{U_1}=U_1\circ D_{t^{*}}(u_{k_1+1}...u_{k_2-1})\circ U_2,\nonumber
\end{equation}
where $t^*=t-(k_1-m_1)-(n-k_2+1-m_2)$. In particular, $\left|D_t(U)_{U_2}^{U_1}\right|=\left|D_{t^{*}}(u_{k_1+1}...u_{k_2-1})\right|$.
\end{lemma}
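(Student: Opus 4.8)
The plan is to prove the set identity by establishing the two inclusions separately and then reading off the cardinality statement from them. I emphasize at the outset that the whole argument concerns only the combinatorics of subsequence embeddings and never invokes the alphabet size $q$, so it carries over unchanged from the binary version in \cite{Gabrys}. Writing $N=n-t$ for the common length of the $t$-subsequences, the hypothesis $m_1+m_2\leqslant n-t$ ensures that every $Y\in D_t(U)_{U_2}^{U_1}$ factors unambiguously as $Y=U_1WU_2$ with $W=Y_{[m_1+1,\,N-m_2]}$ a (possibly empty) block of length $N-m_1-m_2$, and the assignment $W\mapsto U_1WU_2$ is injective because $U_1,U_2$ have fixed lengths. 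Hence once both inclusions are in hand the cardinality equality is automatic.

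For the inclusion $\supseteq$, I would take any $W\in D_{t^*}(u_{k_1+1},\ldots,u_{k_2-1})$. By the definition of $k_1$ the prefix $U_1$ is a subsequence of $u_1\cdots u_{k_1}$, by hypothesis $W$ is a subsequence of $u_{k_1+1}\cdots u_{k_2-1}$, and by the definition of $k_2$ the suffix $U_2$ is a subsequence of $u_{k_2}\cdots u_n$. Since the three index intervals $[1,k_1]$, $[k_1+1,k_2-1]$, $[k_2,n]$ are disjoint and increasing (this is where $k_1<k_2$ is used), concatenating the three embeddings realizes $U_1WU_2$ as a subsequence of $U$. A short bookkeeping computation shows $|U_1WU_2|=m_1+(k_2-k_1-1-t^*)+m_2=n-t$ precisely when $t^*$ takes the stated value, so $U_1WU_2\in D_t(U)$; it manifestly begins with $U_1$ and ends with $U_2$.

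The substance of the proof is the reverse inclusion $\subseteq$, and the key device is an extremal (greedy) embedding argument. I would fix $Y\in D_t(U)_{U_2}^{U_1}$, write $Y=U_1WU_2$ as above, and fix any strictly increasing embedding $\phi\colon[N]\to[n]$ realizing $Y$ as a subsequence of $U$. Restricting $\phi$ to the first $m_1$ coordinates exhibits $U_1$ as a subsequence of $u_1\cdots u_{\phi(m_1)}$, so the minimality of $k_1$ forces $\phi(m_1)\geqslant k_1$ and hence $\phi(m_1+1)\geqslant k_1+1$. Symmetrically, restricting $\phi$ to the last $m_2$ coordinates exhibits $U_2$ as a subsequence of $u_{\phi(N-m_2+1)}\cdots u_n$, so the maximality of $k_2$ forces $\phi(N-m_2+1)\leqslant k_2$ and hence $\phi(N-m_2)\leqslant k_2-1$. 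Consequently the images $\phi(m_1+1)<\cdots<\phi(N-m_2)$ of the middle block all lie in $[k_1+1,k_2-1]$, which shows $W$ is a subsequence of $u_{k_1+1}\cdots u_{k_2-1}$; the same length count identifies it as a $t^*$-subsequence, i.e. $W\in D_{t^*}(u_{k_1+1},\ldots,u_{k_2-1})$.

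I expect the main obstacle to be exactly this confinement step: one must resist the temptation to first re-embed $U_1$ and $U_2$ at their greedy positions (which would require a separate exchange argument to certify that the relocated matchings do not collide with the embedding of $W$), and instead observe that the \emph{original} embedding $\phi$ already pins the middle block into the window $[k_1+1,k_2-1]$, purely from the extremal definitions of $k_1$ and $k_2$. Combining the two inclusions with the injectivity of $W\mapsto U_1WU_2$ yields the displayed identity, and the cardinality equality $|D_t(U)_{U_2}^{U_1}|=|D_{t^*}(u_{k_1+1},\ldots,u_{k_2-1})|$ follows as the ``in particular'' claim.
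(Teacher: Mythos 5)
Your proof is correct. Note that the paper itself does not prove this lemma at all: it simply asserts that Lemma 2 of \cite{Gabrys} carries over verbatim to $q$-ary alphabets, on the grounds that the argument is purely about positions of embeddings and never uses the alphabet size. Your two-inclusion argument supplies exactly the missing details: the $\supseteq$ direction by concatenating embeddings over the disjoint windows $[1,k_1]$, $[k_1+1,k_2-1]$, $[k_2,n]$ together with the length count identifying $t^*$, and the $\subseteq$ direction by observing that for \emph{any} fixed embedding $\phi$ of $Y=U_1WU_2$, the minimality of $k_1$ forces $\phi(m_1)\geqslant k_1$ and the maximality of $k_2$ forces $\phi(N-m_2+1)\leqslant k_2$, so the middle block is automatically confined to $[k_1+1,k_2-1]$. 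Your remark that one should work with the original embedding rather than greedily re-embedding $U_1$ and $U_2$ (which would need a separate non-collision argument) correctly identifies the one point where a careless write-up could go wrong; the argument as you give it is complete and alphabet-independent, as the paper claims.
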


\begin{remark}
When $k_1 = k_2 - 1$, we adopt the convention that $u_{k_1+1} \dots u_{k_2-1} = \epsilon$, where $\epsilon$ denotes the empty string. It follows that $|D_{t^*}(\epsilon)|$ is $1$ for $t^* = 0$ and $0$ for $t^* > 0$.
\end{remark}

\section{Unbalanced strings and our  lower bound}
\label{sec3}
In this section, we use the binary unbalanced strings defined in \cite{Liron} to establish lower bounds on the number of subsequences for general $q$-ary strings. To do this, we require some basic operations on strings. These operations can  transform one string into another while preserving the monotonicity of the number of subsequences (i.e. either increasing or decreasing it). First, we recall the following  basic operation from \cite{Hirschberg}.

\textbf{\textit{1) Insertion Operation:}} Hirschberg and Regnier \cite{Hirschberg} proved that inserting a symbol anywhere in the middle of a string cannot reduce the number of subsequences.

\begin{lemma}[Insertion Increases the Number of
Subsequences \cite{Hirschberg}]
\label{lm3}
For any $q$-ary strings $U,V$ and any $a\in \Sigma_q$, we have 
$\left|D_t(UV)\right|\leq \left|D_t(UaV)\right|$.    
\end{lemma}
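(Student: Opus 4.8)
The plan is to construct an explicit injection $\phi\colon D_t(UV)\to D_t(UaV)$ and conclude the inequality from $|D_t(UV)|=|\phi(D_t(UV))|\leqslant|D_t(UaV)|$. Write $m=|UV|-t$, so every $Y=y_1\ldots y_m\in D_t(UV)$ has length $m$, while the target ball $D_t(UaV)$ consists of subsequences of length $m+1$. The idea is that inserting a single copy of $a$ into $Y$ at the right place always yields a length-$(m+1)$ subsequence of $UaV$; the whole difficulty is to insert it so that the map can be inverted.

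First I would fix, for each $Y$, a canonical cut point. Let $d=d(Y)$ be the largest index such that the prefix $y_1\ldots y_d$ is a subsequence of $U$. Since $Y$ is a subsequence of $UV$, any fixed embedding matches an initial segment $y_1\ldots y_c$ into $U$ and the remaining suffix into $V$; as $c\leqslant d$, the suffix $y_{d+1}\ldots y_m$ is a subsequence of $V$. Hence $y_1\ldots y_d$ embeds in $U$ and $y_{d+1}\ldots y_m$ embeds in $V$, so I can define
\[
\phi(Y)=y_1\ldots y_d\,a\,y_{d+1}\ldots y_m,
\]
which is a subsequence of $UaV$ of length $m+1$, i.e. $\phi(Y)\in D_t(UaV)$.

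The hard part is injectivity, because the inserted $a$ may be indistinguishable from copies of $a$ already present near the cut. Suppose $\phi(Y)=\phi(Y')$ with cut points $d\leqslant d'$; a position-by-position comparison of the two length-$(m+1)$ strings shows that either $d=d'$, which forces $Y=Y'$ at once, or $Y$ and $Y'$ differ only by relocating a single symbol $a$ across a block $M=y_{d+1}\ldots y_{d'-1}$, namely $Y=y_1\ldots y_d\,M\,a\,S$ and $Y'=y_1\ldots y_d\,a\,M\,S$ for a common suffix $S$. In the latter case the defining maximality of $d'$ gives that $y_1\ldots y_d\,a\,M$ is a subsequence of $U$; deleting the interior $a$ shows $y_1\ldots y_d\,M$, and in particular $y_1\ldots y_{d+1}$, is also a subsequence of $U$ whenever $M$ is nonempty. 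This contradicts the maximality of $d=d(Y)$, so $M$ must be empty and $Y=Y'$. Therefore $\phi$ is injective and the bound follows.

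I expect the injectivity step to be the only real obstacle: choosing the cut greedily, as the maximal prefix absorbed by $U$, is exactly what rules out the ``sliding $a$'' collisions, whereas a naive or symmetric choice of cut does not. In the write-up I would double-check the boundary cases $d=0$, $d=m$, and $M=\varnothing$ so that both the insertion and the position-by-position comparison remain valid there, and I would note that the same argument applied from the right (or to the reversed strings) recovers the companion statement with $a$ inserted at the front.
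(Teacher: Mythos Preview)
The paper does not prove this lemma; it is quoted verbatim from \cite{Hirschberg} and used as a black box. Your injection argument is a correct, self-contained proof, so in that sense you supply strictly more than the paper does.

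Two small comments on the write-up. First, your treatment of the case $d<d'$ is slightly more indirect than necessary: when $M$ is empty (i.e.\ $d'=d+1$) you correctly observe $Y=Y'$, but in fact this subcase is also impossible, since $y_{d+1}=y_{d'}=a$ together with ``$y_1\ldots y_d\,a$ is a subsequence of $U$'' already contradicts the maximality of $d=d(Y)$. So one can simply say $d<d'$ is impossible and skip the split on $M$. Second, the ``companion statement with $a$ inserted at the front'' is already covered by the lemma (take $U$ empty), so no separate argument is needed.
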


\textbf{\textit{2) Deletion Chain Rule:}} By the definiton of $D_t(\cdot)$, we can directly obtain the following lemma.
\begin{lemma}
\label{lm4}
Let $t,t'$ be positive integers. If $U$ is a $q$-ary string and  $V\in D_t(U)$, then  $D_{t'}(V)\subseteq D_{t+t'}(U)$.   
\end{lemma}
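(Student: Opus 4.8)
The plan is to prove the inclusion elementwise, using nothing more than the transitivity of the subsequence relation made explicit through index embeddings. Write $n=|U|$, so that $|V|=n-t$, and for any $W\in D_{t'}(V)$ we have $|W|=(n-t)-t'=n-(t+t')$. By the definition of the deletion ball, showing $W\in D_{t+t'}(U)$ requires only that $W$ has length $n-(t+t')$---which is immediate from this computation---and that $W$ is obtained from $U$ by deleting exactly $t+t'$ symbols.

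The substantive step is to realize $W$ as a subsequence of $U$ with the correct deletion count. Since $V\in D_t(U)$, there is a strictly increasing map $\psi:[n-t]\to[n]$ recording the retained positions, so that $v_j=u_{\psi(j)}$ for all $j$; similarly, $W\in D_{t'}(V)$ yields a strictly increasing map $\phi:[n-(t+t')]\to[n-t]$ with $w_k=v_{\phi(k)}$ for all $k$. The composition $\psi\circ\phi:[n-(t+t')]\to[n]$ is again strictly increasing and satisfies $w_k=u_{\psi(\phi(k))}$, so $W$ embeds as a subsequence of $U$ through an injection whose image omits exactly $n-(n-(t+t'))=t+t'$ indices. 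Hence $W$ is derived from $U$ by deleting precisely $t+t'$ symbols, giving $W\in D_{t+t'}(U)$; since $W\in D_{t'}(V)$ was arbitrary, $D_{t'}(V)\subseteq D_{t+t'}(U)$.

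I do not expect a genuine obstacle, consistent with the remark preceding the statement that the claim follows directly from the definition of $D_t(\cdot)$. The only point meriting care is confirming that the composed embedding deletes exactly $t+t'$ symbols rather than fewer, and this is forced automatically by the telescoping of lengths $n\to n-t\to n-(t+t')$, which fixes the number of omitted indices at $t+t'$. An alternative, equally short route would be to argue directly at the level of deletions: fixing one deletion pattern that sends $U$ to $V$ and one that sends $V$ to $W$, their concatenation is a valid deletion pattern realizing $W$ from $U$, and the two sets of deleted coordinates are disjoint as subsets of the positions of $U$, so they contribute $t+t'$ deletions in total.
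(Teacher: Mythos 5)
Your proof is correct and is exactly the routine verification the paper has in mind: the paper offers no proof at all, simply asserting that the lemma follows directly from the definition of $D_t(\cdot)$, and your composition-of-embeddings argument (with the length telescoping $n\to n-t\to n-(t+t')$ fixing the deletion count) is the standard way to make that assertion precise. No gaps.
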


\textbf{\textit{3) Permutation Operation:}} We show that the number of subsequences of a string is invariant under any permutation of the alphabet. Specifically, let $f \in S_q$ be a permutation on $\Sigma_q$. For any string $U = S(x_1, \dots, x_r; a_1, \dots, a_r) \in \Sigma_q^n$, we define the action of $f$ on $U$ as$$f(U) = S(x_1, \dots, x_r; f(a_1), \dots, f(a_r)).$$
\begin{lemma}[Permutation Keeps the Number of
Subsequences Unchanged]
\label{lm5}
For any $q$-ary string $U=S(x_1,\ldots,x_r;a_1,\ldots,a_r)$ and a permutation $f\in S_q$,  
$\left|D_t(U)\right|=\left|D_t(f(U))\right|$.    
\end{lemma}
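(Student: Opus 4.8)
The plan is to exhibit an explicit bijection between $D_t(U)$ and $D_t(f(U))$ induced by the relabeling $f$, which immediately forces the two deletion balls to have equal cardinality. The key observation is that a permutation of the alphabet acts only on the \emph{symbols} carried by each coordinate, whereas a deletion acts only on the \emph{positions} of a string; since these two actions are independent, they commute.

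First I would extend $f\in S_q$ to a map on strings by applying it coordinatewise: for $W=w_1\ldots w_m\in\Sigma_q^m$, set $f(W)=f(w_1)\ldots f(w_m)$. Because $f$ is a bijection on $\Sigma_q$, this coordinatewise extension is a length-preserving bijection on $\Sigma_q^*$, whose inverse is the coordinatewise extension of $f^{-1}$. Injectivity of $f$ also guarantees that $f(a_i)\neq f(a_{i+1})$ whenever $a_i\neq a_{i+1}$, so $f(U)$ genuinely has the claimed run structure $S(x_1,\ldots,x_r;f(a_1),\ldots,f(a_r))$ and no two adjacent runs collapse into one.

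Next I would show that $Y\in D_t(U)$ if and only if $f(Y)\in D_t(f(U))$. Writing $U=u_1\ldots u_n$, any $Y\in D_t(U)$ arises as $Y=u_{i_1}\ldots u_{i_{n-t}}$ for some index set $1\leqslant i_1<\cdots<i_{n-t}\leqslant n$; applying $f$ coordinatewise gives $f(Y)=f(u_{i_1})\ldots f(u_{i_{n-t}})$, which is exactly the subsequence of $f(U)=f(u_1)\ldots f(u_n)$ selected by the \emph{same} index set, so $f(Y)\in D_t(f(U))$. Since $f$ is invertible, the identical argument applied to $f^{-1}$ yields the converse inclusion. Hence $f$ restricts to a bijection from $D_t(U)$ onto $D_t(f(U))$, and $\left|D_t(U)\right|=\left|D_t(f(U))\right|$ follows.

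There is no genuinely hard step here; the only point requiring care is the bookkeeping that a deletion is determined solely by a choice of retained coordinates, and that this choice is left untouched by the relabeling, so that the coordinatewise map and the deletion map literally commute. Verifying that $f$ is a bijection on $\Sigma_q^*$, and therefore on the finite sets $D_t(U)$ and $D_t(f(U))$, is then routine.
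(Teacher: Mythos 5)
Your proposal is correct and follows essentially the same route as the paper: both establish that the coordinatewise extension of $f$ maps $D_t(U)$ injectively into $D_t(f(U))$ and invoke $f^{-1}$ for the reverse direction, yielding equal cardinalities. Your write-up is simply a more detailed version of the paper's argument, with the commutation of relabeling and deletion made explicit.
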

\begin{proof}
If $Y\in D_t(U)$ then $f(Y)\in D_t(f(U))$. Moreover, given two distinct $X,Y\in D_t(U)$, we have $f(X)\neq f(Y)$. Thus, $\left|D_t(U)\right|\leq \left|D_t(f(U))\right|$. Similarly, we also have $\left|D_t(f(U))\right|\leq \left|D_t(U)\right|$. So, the lemma follows.    
\end{proof}

\textbf{\textit{4) Reduction Operation:}} We introduce a reduction operation and show that it does not increase the number of subsequences.
\begin{definition}[Reduction Operation] 
For a string $S(x_1, \dots, x_r; a_1, \dots, a_r)$, we define the reduction operation as the transformation that yields a binary string
\[S^d(x_1, \dots, x_r; a_1, \dots, a_r) \triangleq S(x_1, \dots, x_r; d_1, \dots, d_r),\] 
where $d_i = (i - 1) \bmod 2$ for $i \in [r]$.
\end{definition}

\begin{lemma}[Reduction Operation Decreases the Number of
Subsequences]
\label{lm6}
For any $q$-ary string $U=S(x_1,\ldots,x_r;a_1,\ldots,a_r)$, it holds that
$\left|D_t(U^d)\right|\leq \left|D_t(U)\right|$.    
\end{lemma}
\begin{proof}
We proceed by induction on $r$.

\emph{Base Cases:}
When $r=1$, we have $U=a_1^{x_1}$ and $U^d=0^{x_1}$, which immediately implies  $|D_t(U)|=|D_t(U^d)|$. 

When $r=2$, we  have $U=a_1^{x_1}a_2^{x_2}$ and $U^d=0^{x_1}1^{x_2}$ with $a_1\neq a_2$. 
By defining a permutation function $g$ over $\Sigma_q$ such that $g(a_1)=0, g(a_2)=1$, and $g(i)=i$ for $i \in \Sigma_q \setminus \{a_1, a_2\}$, it follows from Lemma \ref{lm5} that $|D_t(U)| = |D_t(U^d)|$ .

\emph{Inductive Step:} Assume the claim holds for  any $q$-ary string $U$ with  $r(U) = k$ for all $1 \leq k \leq m$, where $m \geq 2$. We now consider the case where $r(U)=m+1$. Let $U=S(x_1,\ldots,x_{m+1};a_1,\ldots,a_{m+1})$ and  $U^d=S(x_1,\ldots,x_{m+1};d_1,\ldots,d_{m+1})$, where $d_i= (i-1) \bmod 2$ for $i\in [m+1]$.

By Lemmas $\ref{lm1}$ and $\ref{lm2}$, it follows that 
\begin{align}
\left|D_t(U)\right|&=\sum\limits_{i\in \Sigma_q}\left|D_t(U)^{i}\right|=\left|D_t(U)^{a_1}\right|+\left|D_t(U)^{a_2}\right|+\sum\limits_{i\in \Sigma_q \backslash\{a_1,a_2\}}\left|D_t(U)^{i}\right|\nonumber\\ 
&=\left|D_t(S(x_1-1,\ldots,x_{m+1};a_1,\ldots,a_{m+1}))\right|+\left|D_{t-x_1}(S(x_2-1,\ldots,x_{m+1};a_2,\ldots,a_{m+1}))\right|\nonumber\\
&~~~~+\sum\limits_{i\in \Sigma_q \backslash\{a_1,a_2\}}\left|D_{t-x_1}(S(x_2,\ldots,x_{m+1};a_2,\ldots,a_{m+1}))^{i}\right|\nonumber\\
&\geq \left|D_t(S(x_1-1,\ldots,x_{m+1};a_1,\ldots,a_{m+1}))\right|+\left|D_{t-x_1}(S(x_2-1,\ldots,x_{m+1};a_2,\ldots,a_{m+1}))\right|.\label{eq7}
\end{align}
Similarly, for the reduced sequence $U^d$, we have  
\begin{align}
\left|D_t(U^d)\right|=&\sum\limits_{i\in \Sigma_2}\left|D_t(U^d)^{i}\right|=\left|D_t(U^d)^{0}\right|+\left|D_t(U^d)^{1}\right|\nonumber\\ 
=&\left|D_t(S(x_1-1,\ldots,x_{m+1};d_1,\ldots,d_{m+1}))\right|+\left|D_{t-x_1}(S(x_2-1,\ldots,x_{m+1};d_2,\ldots,d_{m+1}))\right|.\label{eq8}
\end{align}

To establish $|D_t(U^d)| \le |D_t(U)|$, we compare the corresponding terms in \eqref{eq7} and \eqref{eq8}. Regarding the second terms, since $r(S(x_2-1, \ldots,x_{m+1}; a_2, \ldots, a_{m+1})) \leq m$, the induction hypothesis and Lemma \ref{lm5} yield
\begin{align*}
\left|D_{t-x_1}(S(x_2-1,\ldots,x_{m+1};d_2,\ldots,d_{m+1}))\right|\leq \left|D_{t-x_1}(S(x_2-1,\ldots,x_{m+1};a_2,\ldots,a_{m+1}))\right|.
\end{align*}
For the first terms, we consider the value of $x_1$:
\begin{itemize}
    \item If $x_1=1$: By the inductive hypothesis,
    \begin{align*}
\left|D_t(S(x_2,\ldots,x_{m+1};d_2,\ldots,d_{m+1}))\right|&\overset{(a)}{=}\left|D_t(S(x_2,\ldots,x_{m+1};d_2',\ldots,d_{m+1}'))\right|\\
&\leq \left|D_t(S(x_2,\ldots,x_{m+1};a_2,\ldots,a_{m+1}))\right|,
\end{align*}
where $d_i'= i \bmod 2$ and Equality $(a)$ follows from Lemma $\ref{lm5}$. Combining  this with Eqs. \eqref{eq7} and \eqref{eq8}, we obtain $|D_t(U^d)| \leq |D_t(U)|$.
\item If $x_1=2$: Note that the first term $|D_t(S(x_1-1, \ldots, x_{m+1}; a_1, \ldots, a_{m+1}))|$ in \eqref{eq7} corresponds to a sequence where the first run has length $1$. By applying the argument established for the $x_1=1$ case to this term, it follows that $|D_t(U^d)| \leq |D_t(U)|$.
\item If $x_1 \geq 3$: The result follows by repeatedly applying the logic established in the previous cases.
\end{itemize}
Thus, $|D_t(U^d)| \leq |D_t(U)|$ holds for $r(U)=m+1$.

By the induction hypothesis, the inequality
$\left|D_t(U^d)\right|\leq \left|D_t(U)\right|$ holds for $r(U)\geq 1$. Therefore, the lemma follows.
\end{proof}
\begin{remark}
\label{rmk1}
By Lemma $\ref{lm6}$, for any string $S(x_1,\ldots, x_{r};a_1,\ldots,a_{r})\in \Sigma_q^n$, we have 
\[\left|D_t(S(x_1,\ldots,x_{r};a_1,\ldots,a_{r}))\right|\geq |D_t(S(x_1,
x_2,\ldots, x_{r};0,1,\ldots,(r-1) \bmod 2)|.\]  Thus, to establish a lower bound on $\left|D_t(S(x_1,\ldots, x_{r};a_1,\ldots,a_{r}))\right|$,
it suffices to  restrict our consideration to  binary strings of the form $S(x_1,x_2,\ldots, x_{r};0,1,\ldots,(r-1) \bmod 2)$. 
Based on \cite[Theorem 2]{Liron}, we obtain the following result.
\end{remark}
\begin{theorem}[Lower bound]
For any $q$-ary string $U=S(x_1,\ldots,x_r;a_1,\ldots,a_r)$ with $n=\sum_{i=1}^r x_i$, we have 
\[|D_t(U)|\geq |D_t(S((r-1)\times 1,n-r+1; 0, 1, \dots, (r-1) \bmod 2)|.\]    
\end{theorem}

\section{Balanced strings}
\label{sec4}
In this section we define a family of strings named $q$-ary \emph{balanced strings}.  A $q$-ary string is called balanced if all its runs have the same length and the symbols across successive runs form a cyclic shift of a fixed permutation of $\Sigma_q$.  The set of all such balanced strings with $r$ runs and each of length $k$ is denoted by $\mathcal{B}^{q}_{r,k}$.  Formally, 
\begin{align*}
\mathcal{B}_{r,k}^q=\{S(r\times k;c_1,\ldots,c_r) \mid c_1\ldots c_r\in \mathcal{C}_q(r)\}.    
\end{align*}  We shall prove that for any fixed $n$ and $r$ with $r \mid n$, any balanced string with length $n$ and $r$ runs has  the maximal number of subsequences under any number of deletions. To prepare for the proof, we first introduce several operations on strings.

\subsection{Cyclic operation}
 To maximize the number of subsequences in the $q$-ary setting, we introduce a cyclic operation on the run symbols. While a simple alternation suffices for the binary case \cite{Liron}, a $q$-ary alphabet requires this operation to ensure symbol distinctiveness across consecutive runs and enhance the distinguishability of subsequences resulting from deletions.

\begin{definition}[Cyclic Operation] For any $q$-ary string $U = S(x_1, \dots, x_r; a_1, \dots, a_r)$, the \emph{cyclic operation} transforms $U$ into its canonical cyclic string $U^c$ by mapping its symbol sequence to a periodic pattern $c_1 \dots c_r \in \mathcal{C}_q(r)$ such that:
\begin{equation*}
U^c \triangleq S(x_1, \dots, x_r; c_1, \dots, c_r),
\end{equation*}
where $c_i = (a_1 + i - 1) \bmod q$ for all $i \in [r]$.
\end{definition}
 
\begin{lemma}[Cyclic Operation Increases the Number of
Subsequences]
\label{lm7}
For any $q$-ary string $U=S(x_1,\ldots,x_r;a_1,\ldots,a_r)$,  
$\left|D_t(U)\right|\leq \left|D_t(U^c)\right|$.  
\end{lemma}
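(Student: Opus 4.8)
My plan is to prove $|D_t(U)|\le|D_t(U^c)|$ by induction on the length $n=\sum_{i=1}^{r}x_i$, comparing the two strings after peeling off the first symbol of a subsequence. When $r\le 2$ the string $U$ contains at most two symbols, so $U^c$ is obtained from $U$ by a relabelling of the alphabet and Lemma~\ref{lm5} already gives $|D_t(U)|=|D_t(U^c)|$; this also disposes of the smallest lengths. For the inductive step I would apply Lemma~\ref{lm1} with $m_1=1$ (empty suffix) together with Lemma~\ref{lm2} to split $D_t(U)$ according to the first symbol of each subsequence. Writing $s_1,\dots,s_k$ for the distinct symbols of $U$ in order of first appearance, $p_\ell$ for the index of the run in which $s_\ell$ first occurs, and $X_j=x_1+\cdots+x_j$ (with $X_0=0$), this yields
\[
|D_t(U)|=\sum_{\ell=1}^{k}\bigl|D_{t-X_{p_\ell-1}}(T_\ell)\bigr|,\qquad T_\ell=S(x_{p_\ell}-1,x_{p_\ell+1},\dots,x_r;a_{p_\ell},\dots,a_r),
\]
since symbols not occurring in $U$ contribute nothing. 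Each $T_\ell$ has length strictly smaller than $n$, so the induction hypothesis gives $\bigl|D_{t-X_{p_\ell-1}}(T_\ell)\bigr|\le\bigl|D_{t-X_{p_\ell-1}}(T_\ell^{\,c})\bigr|$, replacing every tail by its cyclic form.

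Next I would run the same decomposition on $U^c=S(x_1,\dots,x_r;c_1,\dots,c_r)$. Here the distinct symbols appear \emph{as early as possible}: the $j$-th new symbol is $c_j$ and it first occurs exactly in run $j$ for $1\le j\le\min(q,r)$, so
\[
|D_t(U^c)|=\sum_{j=1}^{\min(q,r)}\bigl|D_{t-X_{j-1}}(R_j)\bigr|,\qquad R_j=S(x_j-1,x_{j+1},\dots,x_r;c_j,\dots,c_r),
\]
and each $R_j$ is again cyclic. Because the first-occurrence runs of $U$ are distinct and increasing with $p_1=1$, we have $p_\ell\ge\ell$ and $k\le\min(q,r)$, which lets me pair the $\ell$-th term of the $U$-sum with the $\ell$-th term of the $U^c$-sum. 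By Lemma~\ref{lm5} the choice of starting symbol is immaterial, so $T_\ell^{\,c}$ has the same ball size as $R_{p_\ell}$; moreover $R_{p_\ell}$ is obtained from $R_\ell$ by deleting a leading block of length $\delta_\ell=X_{p_\ell-1}-X_{\ell-1}\ge0$. The decisive observation is that the two deletion budgets differ by exactly this amount, $(t-X_{\ell-1})-(t-X_{p_\ell-1})=\delta_\ell$, so Lemma~\ref{lm4} gives $D_{t-X_{p_\ell-1}}(R_{p_\ell})\subseteq D_{t-X_{\ell-1}}(R_\ell)$ and therefore
\[
\bigl|D_{t-X_{p_\ell-1}}(T_\ell)\bigr|\le\bigl|D_{t-X_{p_\ell-1}}(T_\ell^{\,c})\bigr|=\bigl|D_{t-X_{p_\ell-1}}(R_{p_\ell})\bigr|\le\bigl|D_{t-X_{\ell-1}}(R_\ell)\bigr|.
\]
Summing over $\ell=1,\dots,k$ and discarding the remaining non-negative terms $R_{k+1},\dots,R_{\min(q,r)}$ of the $U^c$-sum then yields $|D_t(U)|\le|D_t(U^c)|$.

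The step I expect to be the main obstacle is the term-by-term alignment in the last paragraph: one must realise that the correct way to pair the two sums is by order of first appearance of symbols, that $p_\ell\ge\ell$ captures precisely the sense in which the cyclic arrangement postpones symbol repetitions, and that the budget deficit between paired terms coincides exactly with the length of the prefix deleted in passing from $R_\ell$ to $R_{p_\ell}$ — which is what makes Lemma~\ref{lm4} applicable. A secondary point worth care is that the induction must be on $n$ rather than on the number of runs $r$, because the tail $T_1$ coming from the leading symbol $a_1=c_1$ still has $r$ runs; the remaining bookkeeping (degenerate tails when some $x_{p_\ell}=1$, empty balls when $t-X_{p_\ell-1}<0$, and the boundary cases $t\ge n-1$) is routine.
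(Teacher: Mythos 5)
Your proof is correct and follows essentially the same route as the paper: both decompose $D_t(U)$ and $D_t(U^c)$ by the first symbol of a subsequence, pair the resulting terms by order of first appearance of the symbols (your $p_\ell\geqslant \ell$ is exactly the paper's $\hat{l}_i\geqslant l'_i$), and close with Lemma~\ref{lm4} for the budget shift together with Lemma~\ref{lm5} and the induction hypothesis for the tails. The one real difference is your choice to induct on the length $n$ rather than on the number of runs $r$; this is a genuine (if small) simplification, since the induction hypothesis then applies directly to the first term $T_1=S(x_1-1,x_2,\ldots,x_r;a_1,\ldots,a_r)$, which still has $r$ runs when $x_1>1$, and so you avoid the paper's separate case analysis on $x_1=1$, $x_1=2$, $x_1\geqslant 3$.
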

\begin{proof}
We proceed by induction on $r$. 

\emph{Base Case}: When $r=1$, we have $U=a_1^{x_1}$ and $U^c=a_1^{x_1}$, which trivially implies $\left|D_t(U)\right|=\left|D_t(U^c)\right|$. Thus, the base case holds.

\emph{Inductive Step:}  Assume the statement holds for every $U$ with $r(U)\leq m$, where $m\geq 1$.

Consider $r(U)= m+1$ and write
\[U=S(x_1,\ldots,x_{m+1};a_1,\ldots,a_{m+1}),U^c=S(x_1,\ldots,x_{m+1};c_1,\ldots,c_{m+1}), \]
where
$c_i= (a_1+i-1) \bmod q$ for $i\in [m+1]$.
For $m+1 \leq q-1$, we extend the definition to $c_i = (a_1 + i - 1) \bmod q$ for $i \in [m+2, q]$ to cover the entire alphabet $\Sigma_q$ esuring that $\{c_i \mid i\in [q]\}=\Sigma_q$.  Define $f: \mathbb{N}\rightarrow \mathbb{N}$ as $f(i)=\sum_{j=1}^{i-1}x_{j}$ for $i\in [2,m+1]$. For each symbol $c_i$ with $i\in [q]$, let $l_i$ denote the index of its first occurrence in the string $a_1\dots a_{m+1}$. If $c_i$ does not appear, set $l_i=m+2$. Note that for any $i \neq j$ where $l_i, l_j\neq m+2$, we have $l_i\neq l_j$. 
By Lemmas $\ref{lm1}$ and $\ref{lm2}$, it follows that 
\begin{align*}
\left|D_t(U)\right|=&\sum\limits_{c_i\in \Sigma_q}\left|D_t(U)^{c_i}\right|=\left|D_t(U)^{c_1}\right|+\sum\limits_{i=2}^{q}\left|D_t(U)^{c_i}\right|\nonumber\\ 
=&\left|D_t(S(x_1-1,\ldots,x_{m+1};a_1,\ldots,a_{m+1}))\right|+\sum\limits_{i=2}^{q}\left|D_{t-x_1}(S(x_2,\ldots,x_{m+1};a_2,\ldots,a_{m+1}))^{c_i}\right|,
\end{align*}
where $\left|D_{t-x_1}(S(x_2,\ldots,x_{m+1};a_2,\ldots,a_{m+1}))^{c_i}\right|=\left|D_{t-f(l_i)}(S(x_{l_i}-1,\ldots,x_{m+1};a_{l_i},\ldots,a_{m+1}))\right|$ for $i\in [2,q]$. Let\\ $\hat{l}_1 \hat{l}_2 \dots \hat{l}_q$ denote the sequence $l_1 l_2 \dots l_q$ sorted in non-decreasing order, such that $\hat{l}_1 \leq \hat{l}_2 \leq \dots \leq \hat{l}_q$. Then
\begin{align}
\left|D_t(U)\right|=\left|D_t(S(x_1-1,\ldots,x_{m+1};a_1,\ldots,a_{m+1}))\right|+\sum_{i=2}^{q}|D_{t-f(\hat{l}_i)}(S(x_{\hat{l}_i}-1,\ldots,x_{m+1};a_{\hat{l}_i},\ldots,a_{m+1}))|.\label{eq10}
\end{align}
Similarly, let $l'_i$ denote the index of its first occurrence of $c_i$ in the string $c_1\dots c_{m+1}$ for $i \le \min\{m+1,q\}$. If $i > \min\{m+1,q\}$, we set $l'_i=m+2$. By Lemmas $\ref{lm1}$ and $\ref{lm2}$, we have  
\begin{align}
\left|D_t(U^c)\right|=&\sum\limits_{c_i\in \Sigma_q}\left|D_t(U^c)^{c_i}\right|=\left|D_t(U^c)^{c_1}\right|+\sum\limits_{i=2}^q\left|D_t(U^c)^{c_i}\right|\nonumber\\ 
=&\left|D_t(S(x_1-1,\ldots,x_{m+1};c_1,\ldots,c_{m+1}))\right|+\sum\limits_{i=2}^{q}\left|D_{t-x_1}(S(x_2,\ldots,x_{m+1};c_2,\ldots,c_{m+1}))^{c_i}\right|\nonumber\\
=&\left|D_t(S(x_1-1,\ldots,x_{m+1};c_1,\ldots,c_{m+1}))\right|+\sum\limits_{i=2}^{q}\left|D_{t-f(l'_i)}(S(x_{l'_i}-1,\ldots,x_{m+1};c_{l'_i},\ldots,c_{m+1}))\right|.\label{eq11}
\end{align}

Next we compare the  corresponding terms in \eqref{eq10} and \eqref{eq11}. By the definitions of $\hat{l}_i$ and $l'_i$, we have $\hat{l}_i\geq l'_i$ for $i\in [2,q]$. For the summands,  Lemma $\ref{lm4}$ implies for each $i\in [2,q]$,  
\begin{align*}
\left|D_{t-f(l'_i)}(S(x_{l'_i}-1,\ldots,x_{m+1};c_{l'_i},\ldots,c_{m+1}))\right|\geq& \left|D_{t-f(\hat{l}_i)}(S(x_{\hat{l}_i}-1,\ldots,x_{m+1};c_{\hat{l}_i},\ldots,c_{m+1}))\right|\\
\overset{(a)}{\geq} &\left|D_{t-f(\hat{l}_i)}(S(x_{\hat{l}_i}-1,\ldots,x_{m+1};a_{\hat{l}_i},\ldots,a_{m+1}))\right|,
\end{align*}
where Inequality $(a)$ follows from the induction hypothesis under the condition of $r(S(x_{\hat{l}_i}-1,\ldots,x_{m+1};a_{\hat{l}_i},\ldots,a_{m+1}))\leq m$.
We now examine the first terms of \eqref{eq10} and \eqref{eq11} based on the value of $x_1$:
\begin{itemize}
    \item When $x_1=1$, by the induction hypothesis,
\begin{align*}
\left|D_t(S(x_2,\ldots,x_{m+1};c_2,\ldots,c_{m+1}))\right|\geq \left|D_t(S(x_2,\ldots,x_{m+1};a_2,\ldots,a_{m+1}))\right|.
\end{align*}
So, combining with Eqs. $(\ref{eq10})$ and $(\ref{eq11})$, it follows that $\left|D_t(U)\right|\leq \left|D_t(U^c)\right|.$
\item When $x_1=2$, the first term in \eqref{eq10} corresponds to a sequence with a leading run of length $1$. Applying the argument for the $x_1=1$ case, it follows that $\left|D_t(U)\right|\leq \left|D_t(U^c)\right|$. 
\item When $x_1\geq 3$, the result follows by repeatedly applying the logic established in the previous cases.
\end{itemize}
 So, when $r(U)=m+1$, we have  $\left|D_t(U)\right|\leq \left|D_t(U^c)\right|$. 

By the induction, the lemma is proved.
\end{proof}
\begin{remark}
\label{rmk2}
By Lemma \ref{lm6}, the $t$-deletion ball size of any $q$-ary string is upper-bounded by that of its canonical cyclic string $U^c$. Since Lemma \ref{lm5} implies $|D_t(U^c)|$ is invariant for all $a_1 \in \Sigma_q$, we fix $a_1=0$ without loss of generality. Hereafter, we omit the symbol sequence and denote $U^c$ simply as:
\begin{equation*}
S(x_1, \dots, x_r) \triangleq S(x_1, \dots, x_r; 0, 1, \dots, (r-1) \bmod q).
\end{equation*}
This convention follows the analytical framework in \cite{Liron}, allowing us to focus exclusively on the optimization of run lengths $(x_1, \dots, x_r)$.
\end{remark}

Next, we establish several properties of $S(x_1, x_2, \dots, x_r)$ in the following lemmas, which generalize Lemma 3 in \cite{Liron}. Henceforth, for any string $U=S(x_1,\ldots,x_r)\in \Sigma_q^*$, let $q_1=\min\{r,q\}$. Unless otherwise specified, we consider the  alphabet $\Sigma_{q_1}$ for the string $U$. For a predicate $P$, we denote by $a|_P$ the value $a$ if $P$ holds, and $0$ otherwise. 

\begin{lemma}
\label{lm8}
Given any string $S(x_1,\ldots,x_r)\in \Sigma_q^*$, let $\sum_{i=1}^{r}x_i=n > t\geq 1$. Define $f(j) = \sum_{i=1}^j x_i$ and $g(j) = \sum_{i=1}^j x_{r-i+1}$ for $j \in [q_1-1]$. Then
we have 
\begin{align*}
\left|D_t(S(x_1,\ldots,x_r))\right|&=|D_t(S(x_2,\ldots,x_r))|+
\sum\limits_{i=0}^{x_1-1}\sum\limits_{j=1}^{q_1-1}|D_{t-f(j)+i}(S(x_{j+1}-1,\ldots,x_{r}))|+1|_{t>n-x_1},\\
\left|D_t(S(x_1,\ldots,x_r))\right|&=|D_t(S(x_1,\ldots,x_{r-1}))|+
\sum\limits_{i=0}^{x_r-1}\sum\limits_{j=1}^{q_1-1}|D_{t-g(j)+i}(S(x_1,\ldots,x_{r-j}-1))|+1|_{t>n-x_r}.
\end{align*}
\end{lemma}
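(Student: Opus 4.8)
The plan is to prove the first identity by partitioning $D_t(S(x_1,\ldots,x_r))$ according to the leading symbol of each subsequence, and then to deduce the second identity from the first by a reversal argument. Throughout I would write $U_k=S(k,x_2,\ldots,x_r)$ for $0\le k\le x_1$ (so $U_{x_1}=U$, while $U_0=S(x_2,\ldots,x_r)$ after relabelling symbols via Lemma~\ref{lm5}), and set $n_k=k+\sum_{i=2}^r x_i$, the length of $U_k$.

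First I would establish a one-step recursion in $k$. Applying Lemma~\ref{lm1} with $m_1=1,m_2=0$ splits the non-empty $t$-subsequences of $U_k$ (for $k\ge1$) by their leading symbol, which can only be one of the $q_1$ distinct symbols $0,1,\ldots,q_1-1$ occurring in $U_k$. Matching the leading $0$ to position $1$ and invoking Lemma~\ref{lm2} with empty suffix (so $k_2=n_k+1$) shows this part contributes exactly $|D_t(U_{k-1})|$. For $s\in\{1,\ldots,q_1-1\}$ the symbol $s$ first occurs at the head of run $s+1$, i.e.\ at position $f_k(s)+1$ with $f_k(s)=k+\sum_{i=2}^{s}x_i$, and Lemma~\ref{lm2} gives $|D_{t-f_k(s)}(S(x_{s+1}-1,x_{s+2},\ldots,x_r))|$ for this part. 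Collecting these and accounting for the empty subsequence (which the first-symbol partition omits precisely when $t=n_k$) yields
\begin{align*}
|D_t(U_k)| = |D_t(U_{k-1})| + \sum_{s=1}^{q_1-1} |D_{t-f_k(s)}(S(x_{s+1}-1,x_{s+2},\ldots,x_r))| + 1|_{t=n_k}.
\end{align*}

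Next I would telescope this recursion from $k=x_1$ down to $k=1$. The terms $|D_t(U_{k-1})|$ telescope to the base value $|D_t(U_0)|=|D_t(S(x_2,\ldots,x_r))|$. Using $f_k(s)=f(s)-(x_1-k)$ and reindexing the middle sum by $i=x_1-k$ turns $\sum_{k=1}^{x_1}\sum_{s=1}^{q_1-1}|D_{t-f_k(s)}(\cdots)|$ into $\sum_{i=0}^{x_1-1}\sum_{j=1}^{q_1-1}|D_{t-f(j)+i}(S(x_{j+1}-1,\ldots,x_r))|$, matching the stated double sum. Since $n_k$ ranges over $\{n-x_1+1,\ldots,n\}$ as $k$ ranges over $\{1,\ldots,x_1\}$ and $t<n$, the accumulated indicators collapse to $\sum_{k=1}^{x_1}1|_{t=n_k}=1|_{t>n-x_1}$, giving the first identity. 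For the second identity I would use the reversal map $Y\mapsto Y^R$, which is a bijection of $D_t(X)$ onto $D_t(X^R)$, followed by the permutation $a\mapsto(r-1-a)\bmod q$ and Lemma~\ref{lm5} to rewrite the reversed string $S(x_r,\ldots,x_1;(r-1)\bmod q,\ldots,0)$ in standard form $S(x_r,\ldots,x_1)$, whose prefix-sum function is exactly $g$; applying the already-proved first identity to it and relabelling each resulting ball back then produces the second identity.

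The hard part will be the correct bookkeeping of the boundary term $1|_{t>n-x_1}$. The first-symbol partition counts only non-empty subsequences, so at each step it undercounts the empty subsequence by one exactly when $t$ equals the current length $n_k$; I must verify that these single-unit corrections neither overlap nor cancel, and that they sum to the single indicator $1|_{t>n-x_1}$ under the hypothesis $t<n$. A secondary point requiring care is the legitimacy of Lemma~\ref{lm2} in the degenerate cases (a matched symbol sitting at the last position, the step $x_1=1$ where $U_0$ genuinely changes its leading symbol, and $r<q$ where only $0,\ldots,q_1-1$ appear); each is resolved by the conventions $|D_{t'}(V)|=0$ for $t'<0$ or $t'>|V|$ and $|D_{|V|}(V)|=1$, but they must be checked so that the recursion and its telescoping stay valid across the entire range of $k$.
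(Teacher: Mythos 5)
Your proof is correct and, for the first identity, follows essentially the same route as the paper: an iterated first-symbol decomposition via Lemmas~\ref{lm1} and~\ref{lm2}, with the prefix sums shrinking by one at each step so that the inner terms become $|D_{t-f(j)+i}(\cdot)|$. The only real difference is the boundary bookkeeping: the paper applies the expansion $\min\{x_1,n-t\}$ times and, when $t>n-x_1$, converts the leftover ball $|D_t(S(x_1-(n-t),x_2,\ldots,x_r))|=1$ into the indicator $1|_{t>n-x_1}$, whereas you run all $x_1$ steps and carry a per-step correction $1|_{t=n_k}$; both are valid, and the checks you flag (that at most one such correction fires, and that the middle terms vanish once $t$ exceeds the current length) do go through under $t<n$. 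For the second identity the paper simply mirrors the argument on suffixes (``similarly''), while you instead use the reversal bijection $Y\mapsto Y^R$ composed with the relabelling $a\mapsto (r-1-a)\bmod q_1$ and Lemma~\ref{lm5}; this is a legitimate alternative, and it has the side benefit of yielding $|D_t(S(x_1,\ldots,x_r))|=|D_t(S(x_r,\ldots,x_1))|$ (the paper's Lemma~\ref{lm9}) directly, without the induction the paper later performs.
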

\begin{proof}
By applying Lemma $\ref{lm1}$ once, we have 
\begin{align*}
\left|D_t(S(x_1,\ldots,x_r))\right|=|D_t(S(x_1-1,x_2,\ldots,x_r))|+
\sum\limits_{j=1}^{q_1-1}|D_{t-f(j)}(S(x_{j+1}-1,\ldots,x_{r}))|.   
\end{align*}
If $x_1>1$ then we can use Lemma $\ref{lm1}$ again and get
\begin{align*}
\left|D_t(S(x_1,\ldots,x_r))\right|=|D_t(S(x_1-2,x_2,\ldots,x_r))|+\sum\limits_{i=0}^{1}
\sum\limits_{j=1}^{q_1-1}|D_{t-f(j)+i}(S(x_{j+1}-1,\ldots,x_{r}))|.  
\end{align*}
Similarly, if $s\leq \min\{x_1,n-t\}$ then we can apply Lemma $\ref{lm1}$ $s$ times to obtain that 
\begin{align*}
\left|D_t(S(x_1,\ldots,x_r))\right|=|D_t(S(x_1-s,x_2,\ldots,x_r))|+\sum\limits_{i=0}^{s-1}
\sum\limits_{j=1}^{q_1-1}|D_{t-f(j)+i}(S(x_{j+1}-1,\ldots,x_{r}))|.  
\end{align*}

We now distinguish between two cases based on the value of $t$.
\begin{itemize}
    \item Case 1: $t\leq n-x_1$. In this case,  $\min\{x_1,n-t\}=x_1$. Thus, we can use Lemma $\ref{lm1}$ exactly $x_1$ times to obtain that
    \begin{align*}
\left|D_t(S(x_1,\ldots,x_r))\right|=|D_t(S(x_2,\ldots,x_r))|+\sum\limits_{i=0}^{x_1-1}
\sum\limits_{j=1}^{q_1-1}|D_{t-f(j)+i}(S(x_{j+1}-1,\ldots,x_{r}))|.  
\end{align*}

\item Case 2: $t > n-x_1$. Using Lemma $\ref{lm1}$ exactly $n-t$ times yields 
\begin{align*}
\left|D_t(S(x_1,\ldots,x_r))\right|=|D_t(S(x_1-(n-t),x_2,\ldots,x_r))|+\sum\limits_{i=0}^{n-t-1}
\sum\limits_{j=1}^{q_1-1}|D_{t-f(j)+i}(S(x_{j+1}-1,\ldots,x_{r}))|.  
\end{align*}
Since $|D_t(S(x_1-(n-t),x_2,\ldots,x_r))|=1$, and noticing that for
$t>|X|$ with $X\in \Sigma_q^*,$ $|D_t(X)|=0$, we have 
\begin{align*}
\left|D_t(S(x_1,\ldots,x_r))\right|=|D_t(S(x_2,\ldots,x_r))|+
\sum\limits_{i=0}^{x_1-1}\sum\limits_{j=1}^{q_1-1}|D_{t-f(j)+i}(S(x_{j+1}-1,\ldots,x_{r}))|+1|_{t>n-x_1}.
\end{align*}
\end{itemize}

So, the first result of the lemma follows.  

By a symmetric argument, the second identity regarding $x_r$ can be shown as $$\left|D_t(S(x_1,\ldots,x_r))\right|=|D_t(S(x_1,\ldots,x_{r-1}))|+
\sum\limits_{i=0}^{x_r-1}\sum\limits_{j=1}^{q_1-1}|D_{t-g(j)+i}(S(x_1,\ldots,x_{r-j}-1))|+1|_{t>n-x_r}.$$

This completes the proof.
\end{proof}

\begin{lemma}
\label{lm9}
For any string $U = S(x_1, \dots, x_r) \in \Sigma_q^*$ and $t \ge 1$, we have
\begin{align*}
\left|D_t(S(x_1, x_2, \ldots,x_{r-1}, x_r))\right|=\left|D_t(S(x_r,x_{r-1},\ldots,x_2,x_1))\right|.
\end{align*}
\end{lemma}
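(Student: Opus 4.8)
The plan is to combine the elementary reversal bijection with the permutation invariance already established in Lemma~\ref{lm5}. Write $U=S(x_1,\ldots,x_r)=u_1u_2\cdots u_n$ and let $U^R=u_nu_{n-1}\cdots u_1$ denote its reversal. The first step is the routine observation that reversing each subsequence is a bijection $D_t(U)\to D_t(U^R)$: if $Y$ is obtained from $U$ by deleting the symbols indexed by some set of size $t$, then $Y^R$ is obtained from $U^R$ by deleting the mirror-image indices, and $Y\mapsto Y^R$ is injective with inverse $Z\mapsto Z^R$. Hence $\left|D_t(U)\right|=\left|D_t(U^R)\right|$, independently of any labelling.

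The second, and main, step is to identify $U^R$ as a symbol-permutation of the canonical string $S(x_r,\ldots,x_1)$. Recall from Remark~\ref{rmk2} that the $j$-th run of $U=S(x_1,\ldots,x_r)$ carries the symbol $(j-1)\bmod q$; therefore the $i$-th run of $U^R$ (which is the $(r-i+1)$-th run of $U$) carries $(r-i)\bmod q$, while both $U^R$ and $S(x_r,\ldots,x_1)$ share the run-length profile $x_r,\ldots,x_1$. The $i$-th run of $S(x_r,\ldots,x_1)$ carries $(i-1)\bmod q$. I would then verify that the map $f(m)=(r-1-m)\bmod q$ is a well-defined permutation of $\Sigma_q$ satisfying $f\big((i-1)\bmod q\big)=(r-i)\bmod q$ for every $i\in[r]$; this is where the cyclic structure is used, since the run labels depend only on the residue modulo $q$ and not on the absolute run index, so a single $f$ works uniformly across all runs. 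Consequently $U^R=f\big(S(x_r,\ldots,x_1)\big)$.

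Applying Lemma~\ref{lm5} to this permutation then gives $\left|D_t(U^R)\right|=\left|D_t(S(x_r,\ldots,x_1))\right|$, and chaining this with the reversal identity yields
\[
\left|D_t(S(x_1,\ldots,x_r))\right|=\left|D_t(U^R)\right|=\left|D_t(S(x_r,\ldots,x_1))\right|,
\]
which is the claim.

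The only genuine subtlety is the symbol bookkeeping in the second step: one must check that reversing a canonically labelled run string produces \emph{exactly} a permuted copy of the canonically labelled reversed-length string, and that the correcting permutation $f$ is consistent across all $r$ runs. This hinges precisely on the labels being functions of the residue modulo $q$ rather than of the absolute run index; everything else is either a direct bijection or an invocation of Lemma~\ref{lm5}. As an alternative route, one could instead induct on $r$ by matching the ``forward'' recursion of Lemma~\ref{lm8} (peeling off $x_1$, with $f(j)=\sum_{i=1}^{j}x_i$) against its ``backward'' counterpart (peeling off $x_r$, with $g(j)=\sum_{i=1}^{j}x_{r-i+1}$), but the reversal bijection is more transparent and avoids the recursive overhead.
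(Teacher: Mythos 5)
Your proof is correct, but it takes a genuinely different route from the paper. The paper proves Lemma~\ref{lm9} by induction on $r$: it applies the ``forward'' recursion of Lemma~\ref{lm8} to $S(x_1,\ldots,x_r)$ (peeling off the first run) and the ``backward'' recursion to $S(x_r,\ldots,x_1)$ (peeling off its last run, which also has length $x_1$), observes that the two expansions have identical coefficients and deletion-budget shifts since $g(j)=\sum_{i=1}^{j}x_i=f(j)$ for the reversed string, and matches the resulting terms pairwise via the induction hypothesis. You instead argue directly: the reversal map $Y\mapsto Y^R$ is a bijection $D_t(U)\to D_t(U^R)$, and $U^R$ equals $f\bigl(S(x_r,\ldots,x_1)\bigr)$ for the permutation $f(m)=(r-1-m)\bmod q$, so Lemma~\ref{lm5} finishes the job. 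Your symbol bookkeeping checks out: the $i$-th run of $U^R$ carries $(r-i)\bmod q=\bigl((r-1)-(i-1)\bigr)\bmod q$, which depends on $i$ only through $(i-1)\bmod q$, so the single affine involution $f$ works uniformly across all runs (and when $r<q$ it stabilizes the subalphabet $\{0,\ldots,r-1\}$ actually used). Your argument is shorter and more conceptual, exposing the statement as reversal symmetry plus relabeling rather than a consequence of the recursion; the paper's inductive route has the mild advantage of staying entirely inside the Lemma~\ref{lm8} machinery that drives the rest of Section~\ref{sec4}, at the cost of a term-by-term verification. You correctly anticipated the paper's approach as the alternative you chose not to pursue.
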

\begin{proof}
For convenience, let $U^T=S(x_r,x_{r-1},\ldots,x_2,x_1)$. we proceed by induction on $r$. 

\emph{Base Case}: When $r=1$ we have $U=0^{x_1}$ and $U^T=0^{x_1}$ such that $\left|D_t(U)\right|=\left|D_t(U^T)\right|$. Thus, the base of the induction holds.

\emph{Inductive Step}:  Assume the claim is true for cases where $r(U)\leq m$ with $m\geq 1$. Consider $r(U)=m+1$ and let $U=S(x_1,x_2,\ldots,x_{m+1})$. For convenience, let $n=\sum\limits_{i=1}^{m+1}x_{i}$ and define a function $f$ such that $f(i)=\sum\limits_{j=1}^{i}x_j$ for $i\in [q_1]$. Then $U^T=S(x_{m+1},\ldots,x_2,x_1)$. By Lemma $\ref{lm8}$, it follows that 
\begin{align}
\left|D_t(U)\right|=|D_t(S(x_2,\ldots,x_{m+1}))|+
\sum\limits_{i=0}^{x_1-1}\sum\limits_{j=1}^{q_1-1}|D_{t-f(j)+i}(S(x_{j+1}-1,\ldots,x_{m+1}))|+1|_{t>n-x_1}.\label{eq12}
\end{align}
Similarly, by Lemma $\ref{lm8}$, we have 
\begin{align}
\left|D_t(U^T)\right|=|D_t(S(x_{m+1},\ldots,x_2))|+
\sum\limits_{i=0}^{x_1-1}\sum\limits_{j=1}^{q_1-1}|D_{t-f(j)+i}(S(x_{m+1},\ldots,x_{j+1}-1))|+1|_{t>n-x_1}.\label{eq13}
\end{align}

Next we compare $\left|D_t(U)\right|$ and $\left|D_t(U^T)\right|$. By the induction  hypothesis, we have 
$$\left|D_{t}(S(x_2,\ldots,x_{m+1}))\right|=\left|D_{t}(S(x_{m+1},\ldots,x_{2}))\right|$$ and
$$|D_{t-f(j)+i}(S(x_{j+1}-1,\ldots,x_{m+1}))|=|D_{t-f(j)+i}(S(x_{m+1},\ldots,x_{j+1}-1))|,$$
for $i\in [x_1-1] \cup \{0\}$ and $j\in [\min\{q-1,m\}]$. Thus, combining with Eqs. $(\ref{eq12})$-$(\ref{eq13})$, it follows that $\left|D_t(U)\right|=\left|D_t(U^T)\right|.$
So, when $r(U)=m+1$, we have  $\left|D_t(U)\right|=\left|D_t(U^T)\right|$. 

By the induction hypothesis, the equality
$\left|D_t(U)\right|=\left|D_t(U^T)\right|$ holds for $r(U)\geq 1$. Therefore, the lemma follows.
\end{proof}

\begin{lemma}
\label{lm10}
For any string $S(x_1, \dots, x_r) \in \Sigma_q^*$ and $t \ge 1$, let $t_1 = \sum_{i=1}^{q_1} x_{r-i+1}$ and $t_2 = \sum_{i=1}^{q_1} x_i$. Then we have 
\begin{align*}
|D_t(S(x_1,\ldots,x_r))|=|D_t(S(x_1,\ldots,x_r-1))|+|D_{t-x_r}(S(x_1,\ldots,x_{r-1}))|-|D_{t-t_1}(S(x_1,\ldots,x_{r-q_1}-1))|.    
\end{align*}
Moreover, we have 
\begin{align*}
|D_t(S(x_1,\ldots,x_r))|=|D_t(S(x_1,\ldots,x_{r-1}))|+\sum\limits_{i=0}^{x_r-1}\big(|D_{t-x_r+i}(S(x_1,\ldots,x_{r-1}))|-|D_{t-t_1+i}(S(x_1,\ldots,x_{r-q_1}-1))|\big). 
\end{align*}
Similarly, we also have 
\begin{align*}
|D_t(S(x_1,\ldots,x_r))|=|D_t(S(x_2,\ldots,x_{r}))|+\sum\limits_{i=0}^{x_1-1}\big(|D_{t-x_1+i}(S(x_2,\ldots,x_{r}))|-|D_{t-t_2+i}(S(x_{q_1+1}-1,\ldots,x_{r}))|\big). 
\end{align*}
\end{lemma}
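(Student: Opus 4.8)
The plan is to reduce all three identities to a single ``one-step'' expansion of the last run, and then to obtain the first identity by comparing two such expansions, the second identity by iterating the first, and the third identity by the reversal symmetry of Lemma~\ref{lm9}. First I would record the one-step version of the last-run expansion. Applying Lemma~\ref{lm1} once with a suffix of length $1$ together with Lemma~\ref{lm2} (equivalently, transporting the first display in the proof of Lemma~\ref{lm8} to the suffix side via Lemma~\ref{lm9}), one gets, with $g(j)=\sum_{i=1}^{j}x_{r-i+1}$,
\begin{align}
\left|D_t(S(x_1,\ldots,x_r))\right|=\left|D_t(S(x_1,\ldots,x_r-1))\right|+\sum_{j=1}^{q_1-1}\left|D_{t-g(j)}(S(x_1,\ldots,x_{r-j}-1))\right|. \label{eqstar}
\end{align}
Write $A_j=\left|D_{t-g(j)}(S(x_1,\ldots,x_{r-j}-1))\right|$, and note that $g(1)=x_r$ and $g(q_1)=t_1$.

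The key step for the first identity is to apply \eqref{eqstar} a second time, now to the shorter string $S(x_1,\ldots,x_{r-1})$ with deletion budget $t-x_r$. Its run count is $r-1$, so its relevant parameter is $q_1''=\min\{r-1,q\}$, and its version of $g$ is $\tilde g(j)=\sum_{i=1}^{j}x_{(r-1)-i+1}$, which satisfies $\tilde g(j)=g(j+1)-x_r$. Substituting $t-x_r$ for $t$ and reindexing $j\mapsto j+1$ turns the resulting sum exactly into $A_1,\ldots,A_{q_1''}$, giving $\left|D_{t-x_r}(S(x_1,\ldots,x_{r-1}))\right|=\sum_{j=1}^{q_1''}A_j$. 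When $r>q$ we have $q_1''=q_1$, so this reads $\sum_{j=1}^{q_1-1}A_j=\left|D_{t-x_r}(S(x_1,\ldots,x_{r-1}))\right|-A_{q_1}$; feeding this back into \eqref{eqstar} and using $A_{q_1}=\left|D_{t-t_1}(S(x_1,\ldots,x_{r-q_1}-1))\right|$ yields the first identity.

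The second identity then follows by telescoping the first. Applying the first identity to $S(x_1,\ldots,x_r-k)$ for $k=0,1,\ldots,x_r-1$ (each instance has last run $x_r-k\ge 1$, hence the same number of runs $r$, the same $q_1$, and the same trailing string $S(x_1,\ldots,x_{r-q_1}-1)$, while its analogues of $x_r$ and $t_1$ are shifted down by $k$), and summing the telescoping chain, the repeated reductions collapse the first term to $\left|D_t(S(x_1,\ldots,x_{r-1}))\right|$ and produce exactly the claimed sum over $i=0,\ldots,x_r-1$. Finally, the third identity is the mirror image of the second: applying the second identity to the reversed string $S(x_r,\ldots,x_1)$ and rewriting every term through Lemma~\ref{lm9} swaps ``last run'' for ``first run'', replaces $t_1$ by $t_2=\sum_{i=1}^{q_1}x_i$, and turns $S(x_1,\ldots,x_{r-q_1}-1)$ into $S(x_{q_1+1}-1,\ldots,x_r)$.

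The main obstacle is the boundary bookkeeping rather than the algebra. The delicate point is the regime $r\le q$, where $q_1=r$ forces $t_1=n$ and makes the ``subtracted'' argument $S(x_1,\ldots,x_{r-q_1}-1)$ degenerate to the empty string; here $q_1''=q_1-1$, so the second application of \eqref{eqstar} already produces $\sum_{j=1}^{q_1-1}A_j$ with no correction, and one must verify that the degenerate term contributes zero under the convention $\left|D_t(X)\right|=0$ for $t>|X|$. One must likewise be careful that reducing a run to length $0$ lowers the run count --- this is precisely where the distinction between $q_1$ and $q_1''=\min\{r-1,q\}$ enters --- and that the shift $\tilde g(j)=g(j+1)-x_r$ is carried through consistently with the index shift $j\mapsto j+1$.
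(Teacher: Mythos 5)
Your argument is correct and delivers all three identities; it differs from the paper's proof only in how the key correction term is produced. The paper argues set-theoretically: it splits $D_t(S(x_1,\ldots,x_r))$ by the last symbol $a$ (Lemma~\ref{lm1}), identifies the $a$-ending part with $D_t(S(x_1,\ldots,x_r-1))$, observes that every subsequence not ending in $a$ forces deletion of the whole last run and hence lies in $D_{t-x_r}(S(x_1,\ldots,x_{r-1}))$, and then subtracts the overcount $|D_{t-x_r}(S(x_1,\ldots,x_{r-1}))_a|=|D_{t-t_1}(S(x_1,\ldots,x_{r-q_1}-1))|$. You reach the same conclusion algebraically, by expanding both $|D_t(S(x_1,\ldots,x_r))|$ and $|D_{t-x_r}(S(x_1,\ldots,x_{r-1}))|$ one step via the suffix form of Lemma~\ref{lm8} and noting, through the reindexing $\tilde g(j)=g(j+1)-x_r$, that the two expansions share every term except $A_{q_1}$ --- which is exactly the paper's correction term, seen as the one summand of the second expansion absent from the first. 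The telescoping step for the second identity and the Lemma~\ref{lm9} reflection for the third coincide with the paper's ``apply the above method $x_r-1$ times'' and ``similarly''. What your route buys is that the correction term falls out of pure bookkeeping rather than a containment argument; what it costs is the extra care with $q_1''=\min\{r-1,q\}$ versus $q_1$, which you handle correctly. Your caveat about the regime $r\leqslant q$ (where $q_1=r$, $t_1=n$, and the subtracted argument degenerates so that the term must be read as $0$, consistent with the last symbol occurring in no earlier run) is well taken; the paper relies on the same convention silently, e.g.\ when deducing the first case of Lemma~\ref{lm13} from this lemma.
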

\begin{proof}
Let $a = r-1 \bmod {q_1}$. Then the element $a$ is the last element of $S(x_1,\ldots,x_r)$. By Lemma $\ref{lm1}$, we have 
\begin{align}
|D_t(S(x_1,\ldots,x_r))|=|D_t(S(x_1,\ldots,x_r))_{a}|+\left|\bigcup\limits_{i\in \Sigma_{q_1}\backslash\{a\}} D_t(S(x_1,\ldots,x_r))_{i}\right|. \label{eq:decom}   
\end{align}
Since $a$ is the terminal symbol of the sequence $S(x_1,\ldots,x_r)$, the first term in \eqref{eq:decom} simplifies to  
\begin{align*}
|D_t(S(x_1,\ldots,x_r))_{a}|=|D_t(S(x_1,\ldots,x_r-1))|.    
\end{align*}
The second term  $\bigcup\limits_{i\in \Sigma_{q_1}\backslash\{a\}}D_t(S(x_1,\ldots,x_r))_{i}$ represents the set of all subsequences of $S(x_1,\ldots,x_r)$ whose last element is not $a$. To obtain such a subsequence, the terminal run of $x_r$ symbols $a$ must be entirely deleted. Consequently, we have 
\begin{align*}
\bigcup\limits_{i\in \Sigma_{q_1}\backslash\{a\}} D_t(S(x_1,\ldots,x_r))_{i}\subseteq D_{t-x_r}(S(x_1,\ldots,x_{r-1})).
\end{align*}
Note that the set $D_{t-x_r}(S(x_1,\ldots,x_{r-1}))$ contains some subsequence of $S(x_1,\ldots,x_r)$ whose last element is $a$. 
By the definitions of $S(x_1,\ldots,x_r)$ and $t_1$, we have 
\begin{align*}
|D_{t-x_r}(S(x_1,\ldots,x_{r-1}))_a|=|D_{t-t_1}(S(x_1,\ldots,x_{r-q_1}-1))|.    
\end{align*}
Therefore, it follows that 
\begin{align*}
\left|\bigcup\limits_{i\in \Sigma_{q_1}\backslash\{a\}} D_t(S(x_1,\ldots,x_r))_{i}\right|&=|D_{t-x_r}(S(x_1,\ldots,x_{r-1}))|-|D_{t-x_r}(S(x_1,\ldots,x_{r-1}))_a|\\
&=|D_{t-x_r}(S(x_1,\ldots,x_{r-1}))|-|D_{t-t_1}(S(x_1,\ldots,x_{r-q_1}-1))|.
\end{align*}
Substituting these back into \eqref{eq:decom}, we obtain
\begin{align*}
|D_t(S(x_1,\ldots,x_r))|=|D_t(S(x_1,\ldots,x_r-1))|+|D_{t-x_r}(S(x_1,\ldots,x_{r-1}))|-|D_{t-t_1}(S(x_1,\ldots,x_{r-q_1}-1))|.    
\end{align*}
Furthermore, we decompose $D_t(S(x_1,\ldots,x_r-1))$ applying the above method $x_r-1$ times, it follows that 
\begin{align*}
|D_t(S(x_1,\ldots,x_r))|=|D_t(S(x_1,\ldots,x_{r-1}))|+\sum\limits_{i=0}^{x_r-1}\big(|D_{t-x_r+i}(S(x_1,\ldots,x_{r-1}))|-|D_{t-t_1+i}(S(x_1,\ldots,x_{r-q_1}-1))|\big).    
\end{align*}
A symmetric argument for the first run yields
\begin{align*}
|D_t(S(x_1,\ldots,x_r))|=|D_t(S(x_2,\ldots,x_{r}))|+\sum\limits_{i=0}^{x_1-1}\big(|D_{t-x_1+i}(S(x_2,\ldots,x_{r}))|-|D_{t-t_2+i}(S(x_{q_1+1}-1,\ldots,x_{r}))|\big). 
\end{align*}
Thus, the lemma follows.
\end{proof}
\subsection{Balancing operation}
 Informally, a string is said to be balanced if the lengths of its runs exhibit minimal variance. A balancing operation reduces this variance by monotonic transformations, that is shortening a long run while extending a short run. The following lemma characterizes conditions under which balancing increases the number of subsequences of a string. This result is essential for establishing the maximality of balanced strings.

\begin{definition}
A sequence of integers $x_{1} x_{2}\dots x_{n}$ is said to be \emph{symmetric} if it is identical to its reverse, i.e., $x_{k} = x_{n-k+1}$ for all $1 \le k \le n$.    
\end{definition} 
 
\begin{lemma}[Balancing Increases the Number of
Subsequences]
\label{lm11}
Let $t\geq 1$ and $U = S(x_1, \dots, x_r) \in \Sigma_q^*$. If $1 \le i < j \le r$ are indices such that $x_{i+1} \dots x_{j-1}$ is symmetric and $x_i - x_j > 1$,  then \begin{align*}
\left|D_t(S(x_1,\ldots,x_r))\right|\leq  
\left|D_t(S(x_1,\ldots,x_{i-1},x_i-1,x_{i+1}\ldots, x_{j-1},x_j+1,x_{j+1},\ldots,x_r))\right|.
\end{align*}   
\end{lemma}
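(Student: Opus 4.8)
The plan is to induct on the number of runs $r$ and reduce the general statement to the \emph{core} configuration $i=1,\ j=r$ (window equal to the whole string), which I will settle directly. Throughout I use the canonical form $S(x_1,\dots,x_r)$ of Remark~\ref{rmk2}, and it is convenient to prove the slightly more general claim allowing any gap $x_i-x_j\geqslant 1$, since the inductive step lowers the gap by one.

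For the inductive step, assume the window does not fill the whole string, so $i>1$ or $j<r$; say $i>1$ (the case $j<r$ is symmetric, peeling the last run via the second identity of Lemma~\ref{lm8}). I peel the entire first run of both $U$ and its balanced version $V$ using Lemma~\ref{lm8}; since run $1$ is unchanged and lies outside the window, the prefactors $f(j')$ agree for the two strings up to the index at which run $i$ is removed, and I compare the expansions term by term. Three kinds of terms occur: (i) terms that shorten a run strictly before the window, which are smaller balancing instances with the \emph{same} gap and are closed by the induction hypothesis; (ii) the term $j'=i-1$, which shortens run $i$ and is a smaller instance with gap lowered to $x_i-x_j-1$, closed by the induction hypothesis (or, when $x_i-x_j=2$, by the swap case discussed below); and (iii) terms with $j'\geqslant i$, for which run $i$ has already been deleted. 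In case (iii) the $U$- and $V$-substrings differ only in run $j$ (longer by one in $V$) while the $V$-budget is exactly one larger, so Lemma~\ref{lm4} (the deletion chain rule) gives termwise domination; once both runs $i,j$ are deleted the two terms coincide. The pleasant point is that the symmetry of the middle is \emph{not} needed in (iii): deleting run $i$ erases all dependence on $x_{i+1}\dots x_{j-1}$ and leaves a pure deletion-chain comparison.

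For the core case $i=1,\ j=r$ I expand $\left|D_t(U)\right|$ by its first symbol and $\left|D_t(V)\right|$ by its last symbol, both via Lemma~\ref{lm8}. The two leading terms are literally the same string $S(x_1-1,x_2,\dots,x_r)$ and cancel, so it suffices to dominate the $U$-sum by the $V$-sum termwise. Here the hypothesis $x_{i+1}\dots x_{j-1}=x_{j-1}\dots x_{i+1}$ does the work: reversing the $j'$-th $V$-substring by Lemma~\ref{lm9} and using the palindromic middle shows it equals the $j'$-th $U$-substring except that its final run has length $x_1-1$ instead of $x_r$. Since $x_1-1\geqslant x_r$ and the $V$-budget exceeds the $U$-budget by exactly $x_1-1-x_r$, deleting that surplus from the longer final run and invoking Lemma~\ref{lm4} once more yields the desired term-by-term inequality, hence $\left|D_t(U)\right|\leqslant\left|D_t(V)\right|$ in the core case.

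I expect the main obstacle to be the boundary of the induction, where the gap collapses to $1$. This is a \emph{swap}: $U=S(P,v{+}1,M,v,Q)$ and $V=S(P,v,M,v{+}1,Q)$ carry the same multiset of run lengths, so it is not a variance-reducing move and cannot be produced by the peeling step (that step would require the lemma at gap $0$, which is false). I plan to establish the swap inequality $\left|D_t(U)\right|\leqslant\left|D_t(V)\right|$ (in fact an equality) separately: when $P$ and $Q$ are empty the window $S(v{+}1,M,v)$ is a palindrome, so a reversal followed by a symbol permutation (Lemmas~\ref{lm9} and~\ref{lm5}) carries $U$ exactly onto $V$; the presence of $P,Q$ should then be absorbed by conditioning on a prefix and a suffix through Lemmas~\ref{lm1} and~\ref{lm2}, so that each restricted deletion ball is governed by a palindromic sub-window to which the same reversal argument applies. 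Checking that this conditioning pairs the $U$-terms with the $V$-terms correctly is the delicate part I anticipate having to work out in detail.
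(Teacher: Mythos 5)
Your overall architecture is the same as the paper's: a core case $i=1$, $j=r$ settled by reversing the palindromic window (Lemma~\ref{lm9}) and a deletion-chain comparison (Lemma~\ref{lm4}), plus an induction that peels outer runs and compares the resulting expansions term by term. However, the strengthening you build the induction on --- that the balancing inequality holds for any gap $x_i-x_j\geqslant 1$, for an arbitrary window position --- is false, and so is the fallback ``swap'' inequality $\left|D_t(S(P,v{+}1,M,v,Q))\right|\leqslant\left|D_t(S(P,v,M,v{+}1,Q))\right|$ when $P$ is nonempty (it is certainly not an equality). Concretely, take $q=2$, $U=S(1,2,1)=0110$, $i=2$, $j=3$, so the gap is $1$, the middle is empty (trivially symmetric), $P=(1)$ and $Q$ is empty; the balanced/swapped string is $V=S(1,1,2)=0100$. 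Then $D_2(0110)=\{00,01,10,11\}$ has size $4$ while $D_2(0100)=\{00,01,10\}$ has size $3$, so $\left|D_2(U)\right|>\left|D_2(V)\right|$. Conditioning on a prefix via Lemmas~\ref{lm1} and~\ref{lm2} cannot rescue this, because the statement you are trying to prove is simply not true in that regime.

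The repair is exactly the dichotomy encoded in the paper's Lemma~\ref{lm12}: the gap-$1$ hypothesis $x_1>x_r$ suffices only when the window is flush against the end of the string being peeled (parts 1 and 2), whereas once there is a run to the left of the window you must require $x_i-x_j\geqslant 2$ (parts 3 and 4). With that two-regime induction statement, the peeling term $l=i-1$ --- the only one that lowers the gap by one --- always lands in the flush configuration with no left outer runs, where gap $\geqslant 1$ is enough and is handled by the reversal argument you already describe for the core case; a gap-$1$ instance with a nonempty prefix is never needed. Your observation that the symmetry of the middle is not needed for the terms in which run $i$ has already been deleted is correct and matches the paper. But as written, your induction invokes a false statement at its boundary, so the proof does not go through without restructuring the hypothesis along the lines above.
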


To prove Lemma \ref{lm11}, we require the following lemma, which generalizes  results for binary strings from \cite[Lemma 8]{Liron} to $q$-ary strings.

\begin{lemma}
\label{lm12}   
 Let $t\geq 1$ and $x_1>x_{r}\geq 1$. If the sequence $x_{2}\ldots x_{r-1}$ is symmetric, we have the following results:
\begin{enumerate}
    \item $\left|D_t(S(x_1,\ldots,x_r))\right|\leq \left|D_t(S(x_1-1,x_2,\ldots,x_{r-1},x_{r}+1))\right|$.
    \item If $z\geq 1$ then $\left|D_t(S(x_1,\ldots,x_r,z))\right|\leq \left|D_t(S(x_1-1,x_2,\ldots,x_{r-1},x_{r}+1,z))\right|$.
    \item If $y\geq 1$ and $x_1-x_r\geq 2$ then $\left|D_t(S(y,x_1,\ldots,x_r))\right|\leq \left|D_t(S(y,x_1-1,x_2,\ldots,x_{r-1},x_{r}+1))\right|$.
    \item If $y,z\geq 1$ and $x_1-x_r\geq 2$ then $\left|D_t(S(y,x_1,\ldots,x_r,z))\right|\leq \left|D_t(S(y,x_1-1,x_2,\ldots,x_{r-1},x_{r}+1,z))\right|$.
\end{enumerate}
\end{lemma}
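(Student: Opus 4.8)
The plan is to prove all four parts at once by a single simultaneous induction on the number of runs $r$ (equivalently on the total length $n=\sum_i x_i$), using the reflection identity of Lemma~\ref{lm9} to pass between peeling at the front and peeling at the back. The reason the four parts must travel together is structural: when one peels an extremal run, the correction terms that appear are again ``move one unit from a long run to a short run'' comparisons, but now carrying an extra leading or trailing run; these are precisely Parts~2--4. The base case is $r=2$ (empty symmetric middle), where the claim reduces to $\left|D_t(S(x_1,x_2))\right|\leqslant \left|D_t(S(x_1-1,x_2+1))\right|$ for $x_1>x_2$, which one checks directly from Lemma~\ref{lm8}.

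For the inductive step of Part~1, write $A=S(x_1,x_2,\ldots,x_{r-1},x_r)$ and $B=S(x_1-1,x_2,\ldots,x_{r-1},x_r+1)$, and peel the first run of each via the first identity of Lemma~\ref{lm8}. Since $A$ and $B$ agree on the middle runs $x_2,\ldots,x_{r-1}$ and differ only in the first and last runs, the two expansions line up after re-indexing $B$'s double sum by $i\mapsto i-1$ (using that $B$'s partial sums satisfy $\tilde f(j)=f(j)-1$). One obtains
\begin{align*}
\left|D_t(B)\right|-\left|D_t(A)\right|=D_1+D_2-T_0+I,
\end{align*}
where $D_1=\left|D_t(S(x_2,\ldots,x_{r-1},x_r+1))\right|-\left|D_t(S(x_2,\ldots,x_r))\right|$, the quantity $D_2$ is the termwise difference of the matched parts of the two double sums, $T_0=\sum_{j=1}^{q_1-1}\left|D_{t-f(j)}(S(x_{j+1}-1,\ldots,x_r))\right|$ is the leftover $i=0$ slice of $A$'s sum, and $I=1|_{t>n-x_1+1}-1|_{t>n-x_1}\in\{-1,0\}$. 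By Lemma~\ref{lm3} (insertion) both $D_1\geqslant 0$ and $D_2\geqslant 0$, since each passes from a last run of length $x_r$ to one of length $x_r+1$.

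Thus the whole inequality reduces to showing $D_1+D_2\geqslant T_0+|I|$, and this single uncompensated residual $T_0$ is the main obstacle: one peel does not suffice because $A$ and $B$ differ in \emph{two} runs simultaneously. To close it I would expand the insertion gain $D_1$ with the first identity of Lemma~\ref{lm10} and match its pieces against $T_0$ term by term. Here the hypothesis that the middle $x_2\ldots x_{r-1}$ is symmetric is exactly what makes the indices pair up, while the induction hypotheses (Parts~1--4 applied to the strictly shorter strings $S(x_{j+1}-1,\ldots,x_r)$) convert the mismatched factors into the required direction. Parts~2 and~4 then follow by the same peeling with the trailing run $z$ carried along inertly, and Part~3 is obtained by peeling the added leading run $y$ and invoking Parts~1--2 on what remains; the stronger hypothesis $x_1-x_r\geqslant 2$ in Parts~3--4 supplies the extra unit of slack that is consumed when the leading run $y$ shifts the deletion budgets. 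I expect the term-by-term matching of $D_1$ against $T_0$, together with checking the boundary and indicator cases, to be the most delicate part of the argument.
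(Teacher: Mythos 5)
There is a genuine gap in your argument for Part~1, and it is exactly at the point you flag as delicate. After peeling the first run of both $A=S(x_1,\ldots,x_r)$ and $B=S(x_1-1,\ldots,x_{r-1},x_r+1)$ you are left needing $D_1+D_2\geqslant T_0+|I|$, where $T_0=\sum_{j=1}^{q_1-1}\left|D_{t-f(j)}(S(x_{j+1}-1,\ldots,x_r))\right|$ is an uncompensated sum of $q_1-1$ deletion-ball sizes. You propose to close this by expanding the insertion gain $D_1$ via Lemma~\ref{lm10} and matching it against $T_0$ term by term, but the two objects have quite different structure ($D_1$ expands into differences involving $S(x_2,\ldots,x_{r-1})$ at shift $t-x_r-1$, while $T_0$ involves suffixes $S(x_{j+1}-1,\ldots,x_r)$ at shifts $t-f(j)$), and no pairing is exhibited; the symmetry of $x_2\ldots x_{r-1}$ is invoked as the reason "the indices pair up" but is never actually used in a computation. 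As written, the inequality that carries the entire lemma is asserted rather than proved, and your simultaneous induction on $r$ does not supply it either, since the residual $T_0$ lives at the same value of $r$.

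The paper's proof avoids this residual entirely, and the mechanism is worth internalizing: it never peels the long run of $B$. Instead it first applies the reversal identity (Lemma~\ref{lm9}) together with the symmetry of $x_2\ldots x_{r-1}$ to replace $B$ by $B^T=S(x_r+1,x_2,\ldots,x_{r-1},x_1-1)$, whose \emph{first} run is the short one. Peeling one symbol off the front of $A$ and of $B^T$ then makes the two leading terms literally equal (both are $\left|D_t(S(x_1-1,x_2,\ldots,x_{r-1},x_r))\right|$, after one more application of Lemma~\ref{lm9}), and the remaining sums compare in one stroke by the deletion chain rule (Lemma~\ref{lm4}), because $x_1>x_r$ makes $S(x_2,\ldots,x_r;1,\ldots)$ a $(x_1-x_r-1)$-subsequence of $S(x_2,\ldots,x_{r-1},x_1-1;1,\ldots)$. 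In particular Part~1 needs no induction at all, and Parts~2--4 reduce to it by peeling the extra runs $z$ and $y$ with Lemmas~\ref{lm10} and~\ref{lm8} (your observation about why $x_1-x_r\geqslant 2$ is needed in Parts~3--4 is correct). You mention Lemma~\ref{lm9} in your opening plan but never deploy it where it matters; until the inequality $D_1+D_2\geqslant T_0+|I|$ is actually established, your proof of Part~1 --- and hence of the whole lemma --- is incomplete.
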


\begin{proof} 
1) For convenience, let $U=S(x_1,...,x_r)$ and $V=S(x_1-1,\ldots,x_{r-1},x_r+1)$. We further define $V^T=S(x_r+1,x_{r-1},\ldots,x_{2},x_1-1)$. Note that $V^T=S(x_r+1,x_{2},\ldots,x_{r-1},x_1-1)$ since  $x_2\dots x_{r-1}$ is symmetric. By Lemma $\ref{lm9}$ we have $|D_t(V)|=|D_t(V^T)|$. By Lemma $\ref{lm1}$, it follows that 
\begin{align}
\left|D_t(U)\right|=&\sum\limits_{i=0}^{q_1-1}\left|D_t(U)^{i}\right|=\left|D_t(U)^{0}\right|+\sum\limits_{i=1}^{q_1-1}\left|D_t(U)^{i}\right|\nonumber\\ 
=&\left|D_t\big(S(x_1-1, x_2,\ldots,x_{r-1}, x_{r})\big)\right|+\sum\limits_{i=1}^{q_1-1}\left|D_{t-x_1}(S(x_2,\ldots,x_{r};1,\ldots,(r-1)~\bmod q_1))^{i}\right|,\label{eq14}
\end{align}
  Similarly, applying Lemma $\ref{lm1}$ to  $V^T$ yields
\begin{align}
|&D_t(V^T)|=\sum\limits_{i=0}^{q_1-1}\left|D_t(V^T)^{i}\right|=\left|D_t(V^T)^0\right|+\sum\limits_{i=1}^{q_1-1}\left|D_t(V^T)^{i}\right|\nonumber\\ 
=&\left|D_t\big(S(x_r,x_{r-1},\ldots,x_{2},x_{1}-1)\big)\right|+\sum\limits_{i=1}^{q_1-1}\left|D_{t-x_{r}-1}(S(x_{r-1},\ldots,x_{2},x_1-1;1,\ldots,(r-1) \bmod q_1))^{i}\right|\nonumber\\
\overset{(a)}{=}&\left|D_t\big(S(x_1-1,x_2,\ldots,x_{r-1},x_{r})\big)\right|+\sum\limits_{i=1}^{q_1-1}\left|D_{t-x_{r}-1}(S(x_2,\ldots,x_{r-1},x_1-1;1,\ldots,(r-1) \bmod q_1))^{i}\right|,\label{eq14~~}
\end{align}
where Equality $(a)$ follows from Lemma $\ref{lm9}$ and the symmetry of $x_2\ldots x_{r-1}$. 

Under the assumption $x_1>x_r$, it is observed that
\begin{align*}
S(x_2,\ldots,x_r;1,\ldots,(r-1)~\bmod q_1)\in D_{x_1-x_r-1}(S(x_2,\ldots,x_{r-1},x_1-1;1,\ldots,(r-1)~\bmod q_1)).
\end{align*}
Thus, by Lemma $\ref{lm4}$, it follows that for each $i\in [q_1-1]$,
\begin{align*}
\left|D_{t-x_1}(S(x_2,\ldots,x_{r};1,\ldots,(r-1) \bmod q_1))^{i}\right|\leq \left|D_{t-x_{r}-1}(S(x_2,\ldots,x_{r-1},x_1-1;1,\ldots,(r-1) \bmod q_1))^{i}\right|.
\end{align*}
 Comparing the corresponding terms in \eqref{eq14} and \eqref{eq14~~}, we conclude that $|D_t(U)|\leq |D_t(V^T)|.$ Since $|D_t(V)|=|D_t(V^T)|$, the inequality $|D_t(U)|\leq |D_t(V)|$ holds. 

2) For convenience, let $U_1=S(x_1,...,x_r,z)$ and  $V_1=S(x_1-1,\ldots,x_{r-1},x_r+1,z)$. We define $t_1=z+\sum\limits_{i=1}^{q_1-1}x_{r-i+1}$ and $t_2=z+1+\sum\limits_{i=1}^{q_1-1}x_{r-i+1}=t_1+1$. By Lemma $\ref{lm10}$, we have 
\begin{align*}
|D_t(U_1)|&=|D_t(S(x_1,\ldots,x_r,z))|\\&=|D_t(S(x_1,\ldots,x_r))|+\sum\limits_{i=0}^{z-1}\big(|D_{t-z+i}(S(x_1,\ldots,x_{r}))|-|D_{t-t_1+i}(S(x_1,\ldots,x_{r+1-q_1}-1))|\big),\\    
|D_t(V_1)|&=|D_t(S(x_1-1,\ldots,x_r+1,z))|\\&=|D_t(S(x_1-1,\ldots,x_r+1))|+\sum\limits_{i=0}^{z-1}\big(|D_{t-z+i}(S(x_1-1,\ldots,x_{r}+1))|-|D_{t-t_2+i}(S(x_1-1,\ldots,x_{r+1-q_1}-1))|\big). 
\end{align*}

Since $x_1>x_r$ and $x_2\ldots x_{r-1}$ is symmetric,  Case $1)$ of Lemma $\ref{lm12}$ implies that 
\begin{align*}
|D_t(S(x_1,\ldots,x_r))|&\leq |D_t(S(x_1-1,\ldots,x_r+1))|,\\ 
|D_{t-z+i}(S(x_1,\ldots,x_r))|&\leq |D_{t-z+i}(S(x_1-1,\ldots,x_r+1))|,
\end{align*}
for $i\in [z-1]\cup\{0\}$.
Regarding the subtracted terms in the summations, we observe that $t_2=t_1+1$ and $S(x_1-1,\ldots,x_{r+1-q_1}-1)\in D_1(S(x_1,\ldots,x_{r+1-q_1}-1))$. By Lemma $\ref{lm4}$, it follows that 
\begin{align*}
|D_{t-t_2+i}(S(x_1-1,\ldots,x_{r+1-q_1}-1))|\leq |D_{t-t_1+i}(S(x_1,\ldots,x_{r+1-q_1}-1))|   
\end{align*}
for $i\in [z-1]\cup\{0\}$.
Comparing the corresponding terms in the expressions for $|D_t(U_1)|$ and $|D_t(V_1)|$, we have  $|D_t(U_1)|\leq |D_t(V_1)|$.

3) For convenience, let $U_2=S(y,x_1,...,x_r)$ and $V_2=S(y,x_1-1,\ldots,x_{r-1},x_r+1)$. We define  $n=y+\sum\limits_{i=1}^{r}x_i$, and let $f(j)=y+\sum\limits_{i=1}^{j-1}x_i$ for $j\in [q_1-1]$. By Lemma $\ref{lm8}$, we have 
\begin{align*}
|D_t(U_2)|=|D_t(S(y,x_1,\ldots,x_r))|&=|D_t(S(x_1,\ldots,x_r))|+\sum\limits_{i=0}^{y-1}|D_{t-y+i}(S(x_{1}-1,\ldots,x_{r}))|\\
&~~~~+\sum\limits_{i=0}^{y-1}\sum\limits_{j=2}^{q_1-1}|D_{t-f(j)+i}(S(x_{j}-1,\ldots,x_{r}))|+1|_{t>n-y},\\    
|D_t(V_2)|=|D_t(S(y,x_1-1,\ldots,x_r+1))|&=|D_t(S(x_1-1,\ldots,x_r+1))|+\sum\limits_{i=0}^{y-1}|D_{t-y+i}(S(x_{1}-2,\ldots,x_{r}+1))|\\
&~~~~+\sum\limits_{i=0}^{y-1}\sum\limits_{j=2}^{q_1-1}|D_{t-f(j)+i+1}(S(x_{j}-1,\ldots,x_{r}+1))|+1|_{t>n-y}. 
\end{align*}

Given $x_1-x_r\geq 2$, we have $x_1>x_r$ and $x_1-1>x_r$. Since $x_2\ldots x_{r-1}$ is symmetric, Case $1)$ of Lemma $\ref{lm12}$ yields
\begin{align*}
|D_t(S(x_1,\ldots,x_r))|&\leq |D_t(S(x_1-1,\ldots,x_r+1))|,\\ 
|D_{t-y+i}(S(x_1-1,\ldots,x_r))|&\leq |D_{t-y+i}(S(x_1-2,\ldots,x_r+1))|,
\end{align*}
for $i\in [y-1]\cup \{0\}$.
Furthermore, we address the double summation terms. Since $S(x_j-1,\ldots,x_{r})\in D_1(S(x_j-1,\ldots,x_{r}+1))$, by Lemma $\ref{lm4}$, it follows that 
\begin{align*}
|D_{t-f(j)+i}(S(x_j-1,\ldots,x_{r}))|\leq |D_{t-f(j)+i+1}(S(x_j-1,\ldots,x_{r}+1))|   
\end{align*}
for $i\in [y-1]\cup\{0\}$ and $j\in [q_1-1]\backslash\{1\}$.
Therefore, we have  $|D_t(U_2)|\leq |D_t(V_2)|$.

4) For convenience, let $U_3=S(y,x_1,...,x_r,z)$ and $V_3=S(y,x_1-1,\ldots,x_{r-1},x_r+1,z)$. 
We let  $n=z+y+\sum\limits_{i=1}^{r}x_i$, and define $f(j)=y+\sum\limits_{i=1}^{j-1}x_i$ for $j\in [q_1-1]$. By Lemma $\ref{lm8}$, we have 
\begin{align*}
|D_t(U_3)|=|D_t(S(y,x_1,\ldots,x_r,z))|&=|D_t(S(x_1,\ldots,x_r,z))|+\sum\limits_{i=0}^{y-1}|D_{t-y+i}(S(x_{1}-1,\ldots,x_{r},z))|\\
&~~~~+\sum\limits_{i=0}^{y-1}\sum\limits_{j=2}^{q_1-1}|D_{t-f(j)+i}(S(x_{j}-1,\ldots,x_{r},z))|+1|_{t>n-y},\\    
|D_t(V_3)|=|D_t(S(y,x_1-1,\ldots,x_r+1,z))|&=|D_t(S(x_1-1,\ldots,x_r+1,z))|+\sum\limits_{i=0}^{y-1}|D_{t-y+i}(S(x_{1}-2,\ldots,x_{r}+1,z))|\\
&~~~~+\sum\limits_{i=0}^{y-1}\sum\limits_{j=2}^{q_1-1}|D_{t-f(j)+i+1}(S(x_{j}-1,\ldots,x_{r}+1,z))|+1|_{t>n-y}. 
\end{align*}
Since $x_1-x_r\geq 2$, we have $x_1>x_r$ and $x_1-1>x_r$. When $x_2\ldots x_{r-1}$ is symmetric, by the result of the Case $2)$ in Lemma $\ref{lm12}$, we have 
\begin{align*}
|D_t(S(x_1,\ldots,x_r,z))|&\leq |D_t(S(x_1-1,\ldots,x_r+1,z))|,\\ 
|D_{t-y+i}(S(x_1-1,\ldots,x_r,z))|&\leq |D_{t-y+i}(S(x_1-2,\ldots,x_r+1,z))|,
\end{align*}
for $i\in [y-1]\cup \{0\}$.
Moreover, since $S(x_j-1,\ldots,x_{r},z)\in D_1(S(x_j-1,\ldots,x_{r}+1,z))$, by Lemma $\ref{lm4}$, it follows that 
\begin{align*}
|D_{t-f(j)+i}(S(x_j-1,\ldots,x_{r},z))|\leq |D_{t-f(j)+i+1}(S(x_j-1,\ldots,x_{r}+1,z))|   
\end{align*}
for $i\in [y-1]\cup\{0\}$ and $j\in [q_1-1]\backslash\{1\}$.
Therefore, we have  $|D_t(U_3)|\leq |D_t(V_3)|$.
\end{proof}

Now we will prove Lemma $\ref{lm11}$.
\begin{proof}  We prove Lemma $\ref{lm11}$ by induction on $p$, the number of runs in $U$ outside of the subsequence $S(x_i,\ldots,x_{j};(i-1) \bmod q_1,\ldots,(j-1) \bmod q_1)$, where $p=(i-1)+(r-j)=r+i-j-1$. For convenience, we refer to these as outer runs. 

\emph{Base Case}: When $p=1$,  there is exactly one outer run. The lemma is reduced to Lemma $\ref{lm12}~2)$ or $\ref{lm12}~3)$ which indicates that the base of the induction holds. 

\emph{Induction Step}: Assume the claim holds for all cases where the number of outer runs is at most $p$ with $p \ge 1$. Consider the case with $p+1$ outer runs, that is, $r+i-j-1=p+1$. Note that there are at least two outer runs. If there is exactly one outer run on each side (i.e., $i=2$ and $j=r-1$), this is the case of Lemma $\ref{lm12}~ 4)$. Otherwise, there must be at least two runs on one side, that is, $i>2$ or $j<r-1$. We analyze these two scenarios separately.

Let $U=S(x_1,\ldots,x_{i-1},x_i,x_{i+1},\ldots,x_{j-1},x_{j},x_{j+1},\ldots,x_r)$ and  $V=S(x_1,\ldots,x_{i-1},x_i-1,x_{i+1},\ldots,x_{j-1},x_{j}+1,x_{j+1},$\\$\ldots,x_r)$. The total length of both sequences is denoted by  $n = \sum\limits_{w=1}^{r}x_w$.

\textbf{Case 1: $i>2$}. Define $f(l)=\sum\limits_{w=1}^{l}x_w$ for $l\in [q_1-1]$. 
By Lemma $\ref{lm8}$, we have
\begin{small}
\begin{align*}
|D_t(U)|&=|D_t(S(x_2,\ldots,x_{i-1},x_{i},\ldots,x_j,\ldots,x_r))|+
\sum\limits_{k=0}^{x_1-1}\sum\limits_{l=1}^{\min\{i-2,q_1-1\}}|D_{t-f(l)+k}(S(x_{l+1}-1,\ldots,x_{i},\ldots,x_j,\ldots,x_{r}))|\\
&~~~~+\sum\limits_{k=0}^{x_1-1}\sum\limits_{l=i-1,l\leq q_1-1}|D_{t-f(l)+k}(S(x_{i}-1,x_{i+1},\ldots,x_{j},\ldots,x_{r}))|\\
&~~~~+\sum\limits_{k=0}^{x_1-1}\sum\limits_{l=i,l\leq q_1-1}^{j-2}|D_{t-f(l)+k}(S(x_{l+1}-1,\ldots,x_{j},\ldots,x_{r}))|\\
&~~~~+\sum\limits_{k=0}^{x_1-1} \sum\limits_{l=j-1,l\leq q_1-1} |D_{t-f(l)+k}(S(x_{j}-1,\ldots,x_{r}))|\\
&~~~~+\sum\limits_{k=0}^{x_1-1}\sum\limits_{l=j}^{q_1-1}|D_{t-f(l)+k}(S(x_{l+1}-1,\ldots,x_{r}))|+1|_{t>n-x_1},\\
|D_t(V)|&=|D_t(S(x_2,\ldots,x_{i-1},x_{i}-1,\ldots,x_j+1,\ldots,x_r))|+
\sum\limits_{k=0}^{x_1-1}\sum\limits_{l=1}^{\min\{i-2,q_1-1\}}|D_{t-f(l)+k}(S(x_{l+1}-1,\ldots,x_{i}-1,\ldots,x_j+1,\ldots,x_{r}))|\\
&~~~~+\sum\limits_{k=0}^{x_1-1}\sum\limits_{l=i-1,l\leq q_1-1}|D_{t-f(l)+k}(S(x_{i}-2,x_{i+1},\ldots,x_{j}+1,\ldots,x_{r}))|\\
&~~~~+\sum\limits_{k=0}^{x_1-1}\sum\limits_{l=i,l\leq q_1-1}^{j-2}|D_{t-f(l)+1+k}(S(x_{l+1}-1,\ldots,x_{j}+1,\ldots,x_{r}))|\\
&~~~~+\sum\limits_{k=0}^{x_1-1}\sum\limits_{l=j-1, l\leq q_1-1}|D_{t-f(l)+1+k}(S(x_{j}, \dots, x_r))|\\
&~~~~+\sum\limits_{k=0}^{x_1-1}\sum\limits_{l=j}^{q_1-1}|D_{t-f(l)+k}(S(x_{l+1}-1,\ldots,x_{r}))|+1|_{t>n-x_1}.\\
\end{align*}
\end{small}
The two expressions in the first two lines are comparable term by term using the induction hypothesis, as in each argument the number of outer runs is decreased at least by one. In the third line, by Lemma $\ref{lm4}$, we easily obtain that 
\begin{align*}
|D_{t-f(l)+k}(S(x_{l+1}-1,\ldots,x_{j},\ldots,x_{r}))|\leq |D_{t-f(l)+1+k}(S(x_{l+1}-1,\ldots,x_{j}+1,\ldots,x_{r}))|,    
\end{align*}
for $k\in [x_1-1]\cup \{0\}$ and $l\in [i,j-2]$. In the forth line, it similarly follows from Lemma $\ref{lm4}$ that
\[
|D_{t-f(l)+k}(S(x_j-1,\dots,x_r))| \leq |D_{t-f(l)+1+k}(S(x_j,\dots,x_r))|,
\]
for $k \in [x_1-1] \cup \{0\}$ and $l=j-1$.
The two expressions in the  fifth line are identical.

Therefore, we have $|D_t(U)|\leq |D_t(V)|$ for $r+i-j-1=p+1$ and $i>2$. 

\textbf{Case 2:} $j<r-1$. 

When $r-q_1 < j$, set
$t_1=\sum\limits_{l=1}^{q_1}x_{r-l+1}$. Then by Lemma $\ref{lm10}$, we have 
\begin{small}
\begin{align*}
|D_t(U)|&=|D_t(S(x_1,\ldots,x_i,\ldots,x_j,\ldots,x_r))|\\
&=|D_t(x_1,\ldots,x_i,\ldots,x_j,\ldots,x_{r-1})|+\sum\limits_{k=0}^{x_r-1}\big(|D_{t-x_r+k}(S(x_1,\ldots,x_i,\ldots,x_j,\ldots,x_{r-1}))|-|D_{t-t_1+k}(S(x_1,\ldots,x_{r-q_1}-1))|\big),\\    
|D_t(V)|&=|D_t(S(x_1,\ldots,x_i-1,\ldots,x_j+1,\ldots,x_r))|\\
&=|D_t(x_1,\ldots,x_i-1,\ldots,x_j+1,\ldots,x_{r-1})|+\sum\limits_{k=0}^{x_r-1}\big(|D_{t-x_r+k}(S(x_1,\ldots,x_i-1,\ldots,x_j+1,\ldots,x_{r-1}))|\\
&~~~~-|D_{t-t_2+k}(S'(x_1,\ldots,x_{r-q_1}-1))|\big),
\end{align*}
where 
\begin{equation*}
S'(x_1,\ldots,x_{r-q_1}-1))=
\begin{cases}
S(x_1,\ldots,x_i-1,\ldots,x_{r-q_1}-1)&\text{when $i+1\leq r-q_1<j$}\\
S(x_1,\ldots,x_i-2)&\text{when $r-q_1=i$}\\
S(x_1,\ldots,x_{r-q_1}-1)&\text{when $r-q_1<i$},
\end{cases}
\end{equation*}
and 
\begin{equation*}
t_2=
\begin{cases}
t_1+1&\text{when $i+1\leq r-q_1<j$}\\
t_1+1&\text{when $r-q_1=i$}\\
t_1&\text{when $r-q_1<i$}.
\end{cases}
\end{equation*}
\end{small}
Since $x_i-x_j>1$ and $x_{i+1}\ldots x_{j-1}$ is symmetric, by the assumption, we have 
\begin{align*}
|D_t(x_1,\ldots,x_i,\ldots,x_j,\ldots,x_{r-1})|&\leq |D_t(x_1,\ldots,x_i-1,\ldots,x_j+1,\ldots,x_{r-1})|,\\ 
|D_{t-x_r+k}(x_1,\ldots,x_i,\ldots,x_j,\ldots,x_{r-1})|&\leq |D_{t-x_r+k}(x_1,\ldots,x_i-1,\ldots,x_j+1,\ldots,x_{r-1})|, 
\end{align*}
for $k\in [x_r-1]\cup\{0\}$.
Moreover, it remains to compare the values  $|D_{t-t_1+k}(S(x_1, \dots, x_{r-q_1}-1))|$ and $|D_{t-t_2+k}(S'(x_1, \dots,\\ x_{r-q_1}-1))|$ for $k \in [x_r-1] \cup \{0\}$. 
For $r-q_1 <j$, we consider the following three subcases.
\begin{itemize}
    \item When $i+1\leq r-q_1<j$, we have $t_2=t_1+1$ and $S(x_1,\ldots,x_{i}-1,\ldots,x_{r-q_1}-1)\in D_1(S(x_1,\ldots,x_i,\ldots,x_{r-q_1}-1))$, by Lemma $\ref{lm4}$, it follows that 
\begin{align*}
|D_{t-t_2+k}(S(x_1,\ldots,x_i-1,\ldots,x_{r-q_1}-1))|\leq |D_{t-t_1+k}(S(x_1,\ldots,x_i,\ldots,x_{r-q_1}-1))|   
\end{align*}
for $k\in [x_r-1]\cup\{0\}$.
\item When $r-q_1=i$, we have $t_2=t_1+1$ and $S(x_1,\dots,x_i-2) \in D_1(S(x_1,\ldots,x_{r-q_1}-1))$, by Lemma $\ref{lm4}$, it follows that
\begin{align*}
    |D_{t-t_2+k}(S(x_1,\dots,x_i-2))| \leq |D_{t-t_1+k}(S(x_1,\ldots,x_{r-q_1}-1))|
\end{align*}
for $k \in [x_r-1] \cup \{0\}$.
\item When $r-q_1<i$, the two expressions are identical for $k \in [x_r-1]\cup\{0\}$. 
\end{itemize} 
 Therefore, when  $r-q_1<j$, we have  $|D_t(U)|\leq |D_t(V)|$.

When $r-q_1\geq j$, define $g(l)=\sum\limits_{w=1}^{l}x_{r-w+1}$ for $l\in [q_1-1]$. 
Since $r-q_1+1\geq j+1$, it follows from Lemma $\ref{lm8}$ that
\begin{small}
\begin{align*}
|D_t(U)|&=|D_t(S(x_1,\ldots,x_{i-1},x_{i},\ldots,x_j,\ldots,x_{r-1}))|+
\sum\limits_{k=0}^{x_r-1}\sum\limits_{l=1}^{q_1-1}|D_{t-g(l)+k}(S(x_{1},\ldots,x_{i},\ldots,x_j,\ldots,x_{r-l}-1))|\\&~~~~+1_{t>n-x_r},\\
|D_t(V)|&=|D_t(S(x_1,\ldots,x_{i-1},x_{i}-1,\ldots,x_j+1,\ldots,x_{r-1}))|+
\sum\limits_{k=0}^{x_r-1}\sum\limits_{l=1}^{q_1-1}|D_{t-g(l)+k}(S(x_1,\ldots,x_{i}-1,\ldots,x_j+1,\ldots,x_{r-l}-1))|\\&~~~~+1_{t>n-x_r}.
\end{align*}
\end{small} The two expressions in the first line are comparable term by term using the induction hypothesis, as in each argument the number of outer runs is decreased at least by one. Therefore, when $r-q_1\geq j$ we have $|D_t(U)|\leq |D_t(V)|$. 

By the above discussion, it follows that $|D_t(U)|\leq |D_t(V)|$ for $r+i-j-1=p+1$ and $j<r-1$.

Therefore, the claim holds for $r+i-j-1=p+1$. So, the lemma follows.
\end{proof}

\subsection{Balanced strings}
By Lemma $\ref{lm5}$, $|D_t(U)|$ is identical for all strings $U \in \mathcal{B}_{r,k}^q$. Thus, for $\mathcal{B}_{r,k}^q$, it suffices to consider the balanced string $S(r\times k)$. For notation brevity, we use $B_{r,k;q}$ to represent this $q$-ary string. In the following, we demonstrate that among all $q$-ary strings of length $rk$ with $r$ runs, $B_{r,k;q}$ achieves the maximum number of subsequences.  

\begin{theorem}
\label{thm2}
Let $U=S(x_1,\ldots,x_r;a_1,\ldots,a_r)$, $n=\sum\limits_{i=1}^{r}x_i$, $a_i\in \Sigma_q$ for $i\in [r]$, and $k=n/r$. If $k$ is an integer then $|D_t(U)|\leq |D_t(B_{r,k;q})|$.
\end{theorem}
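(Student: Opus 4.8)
The plan is to prove the statement in two stages: first reduce an arbitrary $q$-ary string to one whose run symbols are cyclically ordered, and then show that, among all such strings of fixed length and fixed number of runs, the one whose runs all have equal length $k$ is optimal. For the first stage I would invoke the cyclic operation: by Lemma~\ref{lm7} we have $|D_t(U)|\leqslant |D_t(U^c)|$, and by Remark~\ref{rmk2} together with Lemma~\ref{lm5} the count $|D_t(U^c)|$ equals $|D_t(S(x_1,\ldots,x_r))|$, the string with the same run-length profile $(x_1,\ldots,x_r)$ but symbols $0,1,\ldots,(r-1)\bmod q$. Hence it suffices to prove $|D_t(S(x_1,\ldots,x_r))|\leqslant |D_t(B_{r,k;q})|$, i.e. that over all length-$rk$, $r$-run profiles $(y_1,\ldots,y_r)$ with $y_l\geqslant 1$ and $\sum_l y_l=rk$, the uniform profile $(k,\ldots,k)$ maximizes the number of subsequences.

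For the second stage I would run a balancing process driven by Lemma~\ref{lm11}, using the potential $\Phi(y_1,\ldots,y_r)=\sum_l y_l^2$. The idea is to repeatedly replace a profile by a strictly more balanced one while never decreasing $|D_t|$. Two facts make a single step possible. First, if the profile is not uniform then $\max_l y_l-\min_l y_l\geqslant 2$: indeed, with integer entries summing to $rk$, a spread of at most $1$ forces every entry to equal $k$. Second, and this is the crux, one must exhibit a pair of indices to which Lemma~\ref{lm11} actually applies, namely a pair $i<j$ with $|y_i-y_j|\geqslant 2$ whose intervening block $y_{i+1}\ldots y_{j-1}$ is symmetric.

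The key step, and the main obstacle, is producing such a pair with a palindromic middle. My plan is to choose $i<j$ with $|y_i-y_j|\geqslant 2$ and $j-i$ minimal. If $j=i+1$ the middle is empty and trivially symmetric. If $j\geqslant i+2$, assume $y_i\geqslant y_j+2$ (otherwise reverse the profile and use Lemma~\ref{lm9}). Then for every $i<l<j$ both pairs $(i,l)$ and $(l,j)$ have span smaller than $j-i$, so minimality gives $|y_i-y_l|\leqslant 1$ and $|y_l-y_j|\leqslant 1$; combined with $y_i\geqslant y_j+2$ these force $y_l=y_j+1$ for every such $l$. Thus the middle block is constant, hence symmetric, and Lemma~\ref{lm11} (with Lemma~\ref{lm9} supplying the reversed direction when $y_i<y_j$) yields a balancing transfer $y_i\mapsto y_i-1$, $y_j\mapsto y_j+1$ that does not decrease $|D_t|$; since $y_i\geqslant y_j+2\geqslant 3$, this transfer keeps all run lengths at least $1$ and preserves the run count $r$.

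Finally, a direct computation shows $\Delta\Phi=-2(y_i-y_j-1)\leqslant -2$, so $\Phi$ strictly decreases at each transfer; being a nonnegative integer it can drop only finitely often, and a profile admitting no balancing move must be uniform by the two facts above. Therefore the process terminates at $(k,\ldots,k)$ with $|D_t|$ having only grown, which gives $|D_t(S(x_1,\ldots,x_r))|\leqslant |D_t(B_{r,k;q})|$. Combining this with the first-stage reduction $|D_t(U)|\leqslant |D_t(S(x_1,\ldots,x_r))|$ completes the argument.
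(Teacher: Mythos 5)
Your proposal is correct and follows essentially the same route as the paper's proof: a cyclic operation (Lemma~\ref{lm7}) to normalize the run symbols, followed by repeated balancing transfers on a minimal-span pair with $|y_i-y_j|\geqslant 2$ (whose intervening block is forced to be constant, hence symmetric, so Lemma~\ref{lm11} applies), with termination guaranteed by the strictly decreasing potential $\sum_l y_l^2$. You additionally make explicit two points the paper treats implicitly --- the reduction via Lemma~\ref{lm9} when the later run is the longer one, and the verification that a spread of at most $1$ forces the uniform profile --- but the argument is the same.
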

\begin{proof}
We construct a finite sequence of strings $U_0, U_1, \dots, U_m$ such that $U_0 = U$, $U_m = B_{r,k;q}$, and $|D_t(U_i)| \le |D_t(U_{i+1})|$ for all $0 \le i < m$.

First, performing the cyclic operation on $U_0$ yields $U_1=S(x_1, \dots, x_r)$. By Lemma \ref{lm6} and Remark \ref{rmk1}, it follows that $|D_t(U_0)| \le |D_t(U_1)|$.

Next, $U_1$ is transformed into $B_{r,k;q}$ through repeated balancing operations. For any string $U_i\neq B_{r,k;q}$, let $U_i=S(x_1^{(i)},\ldots,x_r^{(i)})$. Since $U_i$ is unbalanced, there exists at least one run length different from $k$ (w.l.o.g., $x_j^{(i)} > k$), which implies the existence of another run length smaller than $k$. We then select a pair of indices $(p, s)$ with $p < s$ such that $|x_p^{(i)} - x_s^{(i)}| > 1$, where $s - p$ is chosen to be minimal. Without loss of generality, assume $x_p^{(i)} > x_s^{(i)}$. The minimality of $s-p$ implies $x_p^{(i)}>x_{p+1}^{(i)}=\ldots=x_{s-1}^{(i)}>x_s^{(i)}$. We define $U_{i+1}$ by decreasing $x_p^{(i)}$ by 1 and increasing $x_s^{(i)}$ by 1. Since $U_i$ and $U_{i+1}$ satisfy the conditions of Lemma \ref{lm7}, we have $|D_t(U_i)| \le |D_t(U_{i+1})|$. This process is finite since the sum $\sum_{j=1}^{r}\left(x_j^{(i)}\right)^2$ is a positive integer that strictly decreases at each step.

Thus, we obtain the non-decreasing chain:
\begin{equation*}
|D_t(U)| \le |D_t(U_1)| \le \dots \le |D_t(B_{r,k;q})|.
\end{equation*}
The theorem follows.
\end{proof}

We now illustrate the balancing process of Theorem $\ref{thm2}$ by transforming an unbalanced string $U_0$ to $B_{r,k;q}$ in the following example.
\begin{example}
\label{ex1}
Let $r=6$, $k=4$, $q=4$, and $t=7$. Given an unbalanced string $U_0=S(7,2,1,3,5,6;0,1,0,2,0,3)$, we perform the cyclic operation on $U_0$ to obtain $U_1=S(7,2,1,3,5,6)$. Next, we apply a series of balancing operations to $U_1$ to obtain the balanced string $U_{10}=S(4,4,4,4,4,4)$, with the specific steps detailed in Table \ref{tab1}.
\begin{table}[htbp]
  \centering
  \caption{A process of transforming $U_0$ to $U_{10}$}
  \label{tab1}
  \begin{tabular}{lcccr}
    \toprule
    $i$ & $U_i$ & runs & $\sum\limits_{j=1}^{6}x_j^2$& $|D_7(U_i)|$ \\
    \midrule
    $0$ & $000000011022200000333333$ & $7,2,1,3,5,6$ & $124$ & $326$\\
    $1$ & $000000011233300000111111$ & $7,2,1,3,5,6$ & $124$ & $378$\\
    $2$ & $000000011233330000111111$ & $7,2,1,4,4,6$ & $122$ & $394$\\
     $3$ & $000000111233330000111111$& $6,3,1,4,4,6$& $114$ & $448$\\
    $4$ & $000001111233330000111111$& $5,4,1,4,4,6$& $110$ & $473$\\
    $5$ & $000001112233330000111111$& $5,3,2,4,4,6$& $106$& $562$\\
    $6$ & $000011112233330000111111$& $4,4,2,4,4,6$& $104$& $581$\\
    $7$ & $000011122233330000111111$& $4,3,3,4,4,6$& $102$& $616$ \\
    $8$ & $000011122233330000011111$ & $4,3,3,4,5,5$& $100$& $626$\\
    $9$ & $000011122223333000011111$ & $4,3,4,4,4,5$& $98$& $646$\\
    $10$& $000011112222333300001111$ & $4,4,4,4,4,4$ & $96$&$666$\\
    \bottomrule\end{tabular}
\end{table}
\end{example}

When $r\nmid n$, the following corollary provides an upper bound on $|D_t(S(x_1,\ldots,x_r;a_1,\dots, a_r))|$, where $n=\sum_{i=1}^{r}x_i$.
\begin{corollary}
\label{cor1}
Let $U=S(x_1,\ldots,x_r;a_1,\ldots,a_r)$ with $ n=\sum_{i=1}^{r}x_i, a_i\in \Sigma_q$ for $i\in [r]$, and $k=\lceil n/r \rceil$, we have 
$|D_t(U)|\leq |D_t(B_{r,k;q})|$.
\end{corollary}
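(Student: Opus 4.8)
The plan is to reduce the case $r \nmid n$ to the already-settled divisible case of Theorem~\ref{thm2} by lengthening $U$ with insertions until its length becomes a multiple of $r$, all while keeping the number of runs fixed at $r$. Since $k = \lceil n/r \rceil$ we have $r(k-1) < n \leqslant rk$, so the deficit $m = rk - n$ satisfies $0 \leqslant m < r$.

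First I would construct an $r$-run string $U'$ of length $rk$ out of $U$ as follows. Writing $U = S(x_1,\ldots,x_r;a_1,\ldots,a_r)$, I distribute the $m$ extra symbols among the runs, inserting each extra symbol into the interior (or at a boundary) of an existing run so that it duplicates the symbol $a_i$ of that run. Inserting a copy of $a_i$ adjacent to the $i$-th run merely lengthens that run and never merges it with a neighbouring run (which carries a different symbol), so the resulting string $U' = S(x_1',\ldots,x_r';a_1,\ldots,a_r)$ still has exactly $r$ runs, with $\sum_{i=1}^r x_i' = n + m = rk$. Applying Lemma~\ref{lm3} once for each of the $m$ insertions (insertion never decreases the number of subsequences), I obtain $|D_t(U)| \leqslant |D_t(U')|$.

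Next, since $U'$ has $r$ runs and length $rk$, its quotient $rk/r = k$ is an integer, so Theorem~\ref{thm2} applies directly to $U'$ and gives $|D_t(U')| \leqslant |D_t(B_{r,k;q})|$. Chaining the two inequalities yields $|D_t(U)| \leqslant |D_t(U')| \leqslant |D_t(B_{r,k;q})|$, which is exactly the claim of the corollary.

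The one point that needs care — and the step I would single out as the main obstacle — is verifying that the insertions preserve the run count, so that Theorem~\ref{thm2} can legitimately be invoked on a string with the same parameter $r$. This is guaranteed precisely because each inserted symbol copies the symbol of an existing run rather than introducing a new one; the bound $m < r$ is not actually needed for correctness (one could even push all the padding into a single run), although it confirms that only a small amount of padding is required. I would also emphasize that the extended string $U'$ need not be balanced, and indeed some $x_i'$ may exceed $k$; this is harmless, since Theorem~\ref{thm2} bounds $|D_t(\cdot)|$ for \emph{every} $r$-run string of length $rk$, balanced or not.
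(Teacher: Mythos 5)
Your proposal is correct and matches the paper's own argument: the paper likewise pads $U$ to length $rk$ by Lemma~\ref{lm3} (it simply appends all $rk-n$ extra symbols to the last run, forming $V=S(x_1,\ldots,x_{r-1},x_r+s;a_1,\ldots,a_r)$) and then invokes Theorem~\ref{thm2}. Your observation that the padded string need not be balanced and that the run count is preserved is exactly the point the paper relies on implicitly.
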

\begin{proof}
When $r\nmid n$, we let $s=rk-n$ and $V=S(x_1,\ldots,x_{r-1},x_r+s;a_1,\ldots,a_r)$. By Lemma $\ref{lm3}$, we have $|D_t(U)|\leq |D_t(V)|$. Since $|V|=rk$ and $r(V)=r$, Theorem $\ref{thm2}$ implies $|D_t(V)|\leq |D_t(B_{r,k;q})|$.     
\end{proof}

\begin{remark}
\label{rmk3}
Theorem  $\ref{thm2}$ and Corollary $\ref{cor1}$ extend the  results of \cite{Liron} from the binary case to the general $q$-ary setting. For $r \nmid n$, Corollary $\ref{cor1}$ provides a universal upper bound that is consistently tighter than the results in \cite{L0} and \cite{Hirschberg}, as demonstrated by the numerical comparisons in Table \ref{tab:comparison}.
\end{remark}

\begin{table}[htbp]
\centering
\caption{Comparison of Upper Bounds for $r \nmid n$}
\label{tab:comparison}
\begin{tabular}{lccccc}
\toprule
$(n, r, q, t)$ & \cite{L0} & \cite{Hirschberg}  & \textbf{Our Bound (Cor.~\ref{cor1})} \\
\midrule
$(13, 3, 3, 10)$ & $66$ & $27$ & $\mathbf{21}$  \\
$(17, 5, 3, 5)$ & $126$ & $5096$ &  $\mathbf{121}$  \\
$(26, 5, 4, 8)$ & $495$ & $1430758$ & $\mathbf{470}$  \\
\bottomrule
\end{tabular}
\end{table}

\section{Our upper bound}\label{sec5}

In this section, we establish an upper bound on the number of distinct subsequences obtained by deleting symbols from a string. Specifically, we give a recursive formula for the exact number of subsequences of a balanced string in $\Sigma_q^*$. We then find its closed-form expression, and use it to obtain a tight upper bound on the number of subsequences for general strings. In addition, we show the improvement that our upper bound provides.

\subsection{Recursive expression}
\begin{definition}
\label{def1}  
For any $r,k,q$, let $B_{r,k;q}'$ be the string derived from $B_{r,k;q}$ by deleting the first symbol. E.g. $B_{4,2;3}'=S(1,2,2,2)=0112200$.
\end{definition}

For convenience, let $b(r,k,t;q)=|D_t(B_{r,k;q})|$ and $b'(r,k,t;q)=|D_t(B_{r,k;q}')|$.

\begin{lemma}
\label{lm13}
Let $r,k,q,t$ be positive integers. If $q\geq r$, then
\begin{align*}
b(r,k,t;q)=\sum\limits_{i=0}^{k}b(r-1,k,t-k+i;q).    
\end{align*}

If $q<r$, then
 \begin{align*}
b(r,k,t;q)=b'(r,k,t;q)+\sum\limits_{i=1}^{q-1} b'(r-i,k,t-ik;q).  
\end{align*}
\end{lemma}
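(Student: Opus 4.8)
The plan is to prove both recursions by partitioning the deletion ball $D_t(B_{r,k;q})$ according to a single boundary run and then identifying each piece with a smaller balanced string, using Lemma~\ref{lm1}, Lemma~\ref{lm2}, and the permutation invariance of Lemma~\ref{lm5}. The two cases genuinely require different partitions: when $q\geqslant r$ every run carries a distinct symbol, so peeling off the \emph{last} run is clean; when $q<r$ the terminal symbol $(r-1)\bmod q$ recurs in several runs, so instead I would partition by the \emph{first} symbol of each subsequence, where the first-occurrence positions are unambiguous.

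\emph{Case $q\geqslant r$.} Here $q_1=r$ and $B_{r,k;q}=0^k1^k\cdots(r-1)^k$ has $r$ pairwise distinct run symbols; in particular the symbol $r-1$ occurs only in the final run. Every subsequence therefore ends in exactly $j$ copies of $r-1$ for a unique $j$ with $0\leqslant j\leqslant k$, and stripping this suffix leaves a subsequence of $B_{r-1,k;q}=0^k\cdots(r-2)^k$ that uses only the symbols $0,\dots,r-2$. Producing such a subsequence costs $k-j$ deletions in the last run and $t-(k-j)$ deletions in the prefix $B_{r-1,k;q}$, and this decomposition is a bijection. Summing over $j$ and writing $i=j$ yields $|D_t(B_{r,k;q})|=\sum_{i=0}^{k}|D_{t-k+i}(B_{r-1,k;q})|$, as claimed. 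Formally this is Lemma~\ref{lm1} applied to the length-one suffixes together with Lemma~\ref{lm2}; terms with negative index vanish since $|D_s(\cdot)|=0$ for $s<0$.

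\emph{Case $q<r$.} Now $q_1=q$ and the run symbols cycle as $0,1,\dots,q-1,0,\dots$. I apply Lemma~\ref{lm1} with $m_1=1$, $m_2=0$ to write $|D_t(B_{r,k;q})|=\sum_{j=0}^{q-1}|D_t(B_{r,k;q})^{\,j}|$. For a fixed starting symbol $j$, its first occurrence in $B_{r,k;q}$ is at position $jk+1$, so Lemma~\ref{lm2} gives $t^{*}=t-jk$ and identifies $D_t(B_{r,k;q})^{\,j}$ with $j\circ D_{t-jk}(W_j)$, where $W_j=j^{k-1}(j+1)^k\cdots$ is the length-$(rk-jk-1)$ suffix following this first occurrence; $W_j$ has $r-j$ runs, run-length pattern $(k-1,k,\dots,k)$, and cyclically shifted symbols. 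Applying the permutation $x\mapsto(x-j)\bmod q\in S_q$ and Lemma~\ref{lm5}, I obtain $|D_{t-jk}(W_j)|=|D_{t-jk}(B'_{r-j,k;q})|$ in the sense of Definition~\ref{def1}. The term $j=0$ contributes $|D_t(B'_{r,k;q})|$, while the terms $j=1,\dots,q-1$ contribute $\sum_{i=1}^{q-1}|D_{t-ik}(B'_{r-i,k;q})|$, which is exactly the second recursion.

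The routine bookkeeping (the deletion budget $t^{*}=t-jk$ and the vanishing of out-of-range terms) is straightforward; the one step that needs care is the identification of $W_j$ with $B'_{r-j,k;q}$. The main obstacle is to see that the suffix $W_j$, which begins in the \emph{middle} of a run (with only $k-1$ remaining copies of symbol $j$) and then resumes the cyclic pattern, is precisely a symbol-shift of the string obtained from $B_{r-j,k;q}$ by deleting its first symbol. This is where Definition~\ref{def1} of $B'$ and the permutation lemma are essential, and it explains why the recursion is stated in terms of the primed strings rather than the $B$ strings themselves.
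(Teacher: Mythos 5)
Your proposal is correct and follows essentially the same route as the paper: the paper disposes of the case $q\geqslant r$ by citing Lemma~\ref{lm10} (whose content is exactly your peel-off-the-last-run decomposition) and of the case $q<r$ by citing Lemma~\ref{lm1} (your partition by the first symbol combined with Lemma~\ref{lm2} and the identification of each suffix with a shifted copy of $B'_{r-j,k;q}$). You have merely unpacked the details that the paper delegates to those lemmas, including the one genuinely delicate point --- that the suffix starting after the first occurrence of symbol $j$ is a symbol-shift of $B'_{r-j,k;q}$ --- which you handle correctly.
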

\begin{proof}
When $q\geq r$, Lemma $\ref{lm10}$ directly implies the first result. When $q<r$, the second result is obtained directly from Lemma $\ref{lm1}$.  
\end{proof}

When $k$ and $q$ are known from the context, we will use the short notations $b(r,t)$ and $b'(r,t)$ for $b(r,k,t;q)$ and $b'(r,k,t;q)$, respectively. The following recurrence is essential for deriving the closed-form formula of $b(r, t)$.

\begin{lemma}[Recursive Expression for $b(r,t)$ and $b'(r,t)$]
\label{lm14}
When $q\geq r$, we have 
\begin{equation*}
b(r,t)=
\begin{cases}
0, &\text{if $t<0$ or $t>kr$;}~\\
\sum\limits_{i=0}^{t}b(r-1,i), &\text{if $0\leq t\leq k$;}~\\
\sum\limits_{i=0}^{k}b(r-1,t-i), &\text{otherwise.}
\end{cases}
\end{equation*}
Moreover, we have 
\begin{equation*}
b'(r,t)=
\begin{cases}
(A)~~0, &\text{if $t<0$ or $t\geq kr$;}~\\
(B)~~1+\sum\limits_{i=1}^{k-1}\sum\limits_{j=1}^{q-1}b'(r-j,t-jk+i), &\text{if $k(r-1)\leq t\leq kr-1$;}~\\
(C)~~\sum\limits_{j=0}^{q-1}b'(r-1-j,t-jk)+\sum\limits_{i=1}^{k-1}\sum\limits_{j=1}^{q-1}b'(r-j,t-jk+i), &\text{otherwise.}
\end{cases}
\end{equation*}

\end{lemma}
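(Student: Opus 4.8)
The plan is to derive both recursions directly from the structural identities already established, reading the two displayed formulas as specializations of Lemma~\ref{lm13} (and, for $b'$, Lemma~\ref{lm8}) to the balanced strings $B_{r,k;q}$ and $B'_{r,k;q}$, followed by a careful accounting of the support of $D_t(\cdot)$ at the boundaries. Since the content is essentially bookkeeping of index ranges, I would organize the argument as a case analysis matching the branches of each formula.

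For $b(r,k,t;q)$ with $q\geqslant r$, I would start from the first identity of Lemma~\ref{lm13}, namely $b(r,k,t;q)=\sum_{i=0}^{k}b(r-1,k,t-k+i;q)$, and reindex by $j=k-i$ to obtain $\sum_{j=0}^{k}b(r-1,k,t-j;q)$, which is the ``otherwise'' branch. The two boundary branches then follow from the support of the deletion ball: $b(r,k,t;q)=0$ whenever $t<0$ or $t>rk=|B_{r,k;q}|$, and when $0\leqslant t\leqslant k$ every summand with $t-j<0$ vanishes, so the sum collapses to $\sum_{j=0}^{t}b(r-1,k,t-j;q)=\sum_{i=0}^{t}b(r-1,k,i;q)$.

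For $b'(r,k,t;q)$ I would first record that $B'_{r,k;q}=S(k-1,k,\ldots,k)$ has $r$ runs, and apply the first decomposition of Lemma~\ref{lm8} with $x_1=k-1$ and $q_1=\min\{r,q\}$ (which equals $q$ in the regime $q\leqslant r$ relevant here). Here $S(x_2,\ldots,x_r)=B_{r-1,k;q}$ contributes $b(r-1,k,t;q)$, while for each $j$ one has $f(j)=jk-1$ and $S(x_{j+1}-1,\ldots,x_r)=B'_{r-j,k;q}$; shifting $i\mapsto i+1$ turns $\sum_{i=0}^{k-2}\sum_{j=1}^{q-1}|D_{t-f(j)+i}(\cdot)|$ into the double sum $\sum_{i=1}^{k-1}\sum_{j=1}^{q-1}b'(r-j,k,t-jk+i;q)$ appearing in both nontrivial branches, and the indicator $1|_{t>n-x_1}$ becomes $1|_{t>(r-1)k}$. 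Branch~$(A)$ is then immediate from $|B'_{r,k;q}|=rk-1$. For branch~$(C)$, valid for $0\leqslant t<(r-1)k$, the indicator vanishes and I would re-expand the leading term through the $q<r$ identity of Lemma~\ref{lm13}, $b(r-1,k,t;q)=\sum_{j=0}^{q-1}b'(r-1-j,k,t-jk;q)$, producing exactly the first sum of $(C)$. For branch~$(B)$, valid for $(r-1)k\leqslant t\leqslant rk-1$, I would instead keep the leading term unexpanded and show that $b(r-1,k,t;q)+1|_{t>(r-1)k}\equiv 1$ on this range: at $t=(r-1)k$ the ball $D_t(B_{r-1,k;q})$ consists of the single empty word, so the leading term is $1$ and the indicator is $0$, while for $t>(r-1)k$ the leading term is $0$ (as $t$ exceeds $|B_{r-1,k;q}|$) and the indicator is $1$.

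I expect the main obstacle to be exactly this interplay at the high-deletion boundary: one must verify that the contribution of the deleted first run (the leading $b$-term) and the length-indicator of Lemma~\ref{lm8} combine to the constant $1$ across the whole range $(r-1)k\leqslant t\leqslant rk-1$, rather than double-counting or leaving a gap at the endpoint $t=(r-1)k$. A secondary subtlety is keeping the index shift $f(j)=jk-1$ consistent, so that the ``$+1$'' places the double-sum argument precisely at $t-jk+i$ with $1\leqslant i\leqslant k-1$, and confirming that summands whose arguments fall outside $[0,\text{length}]$ vanish so that the stated finite sums are exact.
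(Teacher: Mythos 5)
Your proposal is correct and follows essentially the same route as the paper: the $b$-recursion comes from Lemma~\ref{lm13} with boundary bookkeeping, and the $b'$-recursion comes from applying Lemma~\ref{lm8} to $B'_{r,k;q}=S(k-1,k,\ldots,k)$, re-expanding the leading $b(r-1,k,t;q)$ term via the prefix decomposition in the low-$t$ regime, and observing that $b(r-1,k,t;q)+1|_{t>(r-1)k}$ equals $1$ throughout $(r-1)k\leqslant t\leqslant rk-1$. Your index shift $f(j)=jk-1$ and the endpoint analysis at $t=(r-1)k$ match the paper's argument exactly.
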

\begin{proof}
When $q\geq r$, using Lemma $\ref{lm13}$, we get the recursive expression for $b(r,t)$. 

Using Lemma $\ref{lm8}$, we have 
\begin{align*}
b'(r,t)=b(r-1,t)+\sum\limits_{i=1}^{k-1}\sum\limits_{j=1}^{q-1}b'(r-j,t-jk+i)+1|_{t>(r-1)k}.    
\end{align*}
Moreover,  using Lemma $\ref{lm1}$ we have 
\begin{align*}
b(r-1,t)=b'(r-1,t)+\sum\limits_{j=1}^{q-1}b'(r-1-j,t-jk).    
\end{align*}
If $t<k(r-1)$ then we have 
\begin{align*}
b'(r,t)=\sum\limits_{j=0}^{q-1}b'(r-1-j,t-jk)+\sum\limits_{i=1}^{k-1}\sum\limits_{j=1}^{q-1}b'(r-j,t-jk+i).    
\end{align*}
If $t=k(r-1)$ then $b(r-1,t)=1$ and $1|_{t>(r-1)k}=0$, and thus 
\begin{align*}
b'(r,t)=1+\sum\limits_{i=1}^{k-1}\sum\limits_{j=1}^{q-1}b'(r-j,t-jk+i).    
\end{align*}
If $t>k(r-1)$ then $b(r-1,t)=0$ and $1|_{t>(r-1)k}=1$, and thus 
\begin{align*}
b'(r,t)=1+\sum\limits_{i=1}^{k-1}\sum\limits_{j=1}^{q-1}b'(r-j,t-jk+i).    
\end{align*}

\end{proof}
\label{rmk4}



\subsection{Solving the Recursion}


Similar to the expansion method presented in \cite{Liron}, the computation of $b'(r,t)$ proceeds by recursive expansion until all $b'$ expressions reach the boundary conditions and vanish. The final sum accumulates positive contributions only from the additive constant $1$ in Case B. This constant from the term $1+\sum\limits_{i=1}^{k-1}\sum\limits_{j=1}^{q-1}b'(r-j,t-jk+i)$ is added once whenever a  $b'(\tilde{r},\tilde{t})$ satisfying the Case B condition $k(\tilde{r}-1) \leq \tilde{t} \leq k \tilde{r} -1$ is expanded.  In the expansion of $b'(r,t)$, these instances are parameterized by the $t+1$ integer pairs $(\tilde{r}, \tilde{t})$, where $\tilde{r}= \lfloor \frac{\tilde{t}}{k} \rfloor+1$ for $\tilde{t}=0,1, \dots,t$.  Specifically, we aim to count the number of times  that $b'(\tilde{r}, \tilde{t})$ appears in the complete expansion of $b'(r,t)$. We define the set of admissible pairs as
\begin{align*}
    \mathcal{S}_k^{(q)} = \{(q, (q-1)k),&(1,0),(1,1), \dots, (1,k-1),(2,k),(2,k+1), \dots, (2,2k-1), \dots,\\& (q-1, (q-2)k), (q-1, (q-2)k+1), \dots, (q-1, (q-1)k-1)\}.
\end{align*} According to the recursion given in Lemma \ref{lm14}, the problem is equivalent to counting the number of possible  sequences of pairs $\big((r_1,t_1), (r_2, t_2), \dots, (r_m,t_m)\big)$ of arbitrary length $m \ge 1$, where each   $(r_j,t_j)$ is  chosen from $\mathcal{S}_k^{(q)}$, such that $\sum_{j=1}^m r_j = r-\tilde{r}$ and $\sum_{j=1}^m t_j = t-\tilde{t}$. Here,  each pair $(r_j,t_j)$ represents the decrements in the indices $r$ and $t$, respectively.

\begin{definition}
         Let $\# Q(\Delta r, \Delta t)$ denote the number of possible sequences of  pairs $\big((r_1,t_1), \dots, (r_m,t_m)\big)$ for any length $m \ge 1$, where each $(r_j,t_j) \in \mathcal{S}_{k}^{(q)}$  such that $\sum_{j=1}^m r_j=\Delta r$ and $\sum_{j=1}^m t_j = \Delta t$. Furthermore, let $\# Q_l (\Delta r, \Delta t)$ denote the number of such sequences in which the pair $(q,(q-1)k)$ occurs exactly $l$ times.
\end{definition}    
We start by computing $\# Q_0(\Delta r, \Delta t)$.

\begin{lemma}\cite{Liron} \label{lmP}
    Given integers $\Delta r$ and $\Delta t$, let $\#P_0(\Delta r, \Delta t)$ denote the number of sequences  of $\Delta r$ pairs $\big((r_1,t_1), \dots,$\\$ (r_{\Delta r}, t_{\Delta r}) \big)$ chosen from  $\mathcal{S}_k^{(2)} \setminus \{(2,k)\}$ such that $\sum_{j=1}^{\Delta r} r_j= \Delta r$ and $\sum_{j=1}^{\Delta r} t_j = \Delta t$. Equivalently, this is the number of integer sequences $(t_1, \dots, t_{\Delta r})$ satisfying  $\sum_{j=1}^{\Delta r} t_j = \Delta t$ and $0 \le t_j \le k-1$. Then 
    \begin{align*}
        \# P_0(\Delta r, \Delta t) = \sum_{i=0}^{\lfloor \frac{\Delta t}{k}\rfloor} (-1)^i \binom{\Delta r}{i}  \binom{\Delta r + \Delta t - ik-1}{\Delta r - 1}.
    \end{align*}
\end{lemma}

\begin{lemma}\label{lmQ0}
    Let $ q, k, \Delta r$, $\Delta t$ be given nonnegative integers, with $q \geq 2$. Define $\mathcal{W}_{ q,k, \Delta r, \Delta t}$ to be the set of all sequences $\big((z_1,v_1), (z_2,v_2), \dots, (z_{q-1},v_{q-1})\big)$ of nonnegative integer pairs  satisfying the following conditions:
    \begin{itemize}
        \item $\sum_{i=1}^{q-1} i z_i = \Delta r$,
        \item $\sum_{i=1}^{q-1} v_i = \Delta t$,
        \item  $(i-1)k z_i \leq v_i\leq (ik-1)z_i$  for all $i \in [q-1]$.
    \end{itemize} Then
    \begin{align*}
        \#Q_0(\Delta r, \Delta t)=\sum_{\substack{\big((z_i,v_i)\big)_{i=1}^{q-1}\in \\\mathcal{W}_{q,k,\Delta r, \Delta t}}} \frac{(\sum_{i=1}^{q-1} z_i)!}{\prod_{i=1}^{q-1}z_i!}\prod_{i=1}^{q-1} \#P_0(z_i, v_i-(i-1)kz_i).
    \end{align*}
\end{lemma}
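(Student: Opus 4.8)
The plan is to evaluate $\#Q_0(\Delta r,\Delta t)$ by a ``group, reduce, and interleave'' argument that classifies every admissible configuration according to how its tuples distribute over the possible first coordinates, and then reduces each group to the binary count $\#P_0$ of Lemma~\ref{lmP}. The starting observation is that, since $\#Q_0$ forbids the tuple $(q,(q-1)k)$, all tuples are drawn from $\mathbb{S}_k^{(q)}\setminus\{(q,(q-1)k)\}$, which partitions into $q-1$ blocks: for each $i\in[q-1]$ the $i$-th block is $\{(i,(i-1)k),(i,(i-1)k+1),\dots,(i,ik-1)\}$, consisting of exactly $k$ tuples that all share the first coordinate $i$ and whose second coordinates run over the $k$ consecutive integers $(i-1)k,\dots,ik-1$.

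First I would attach to each configuration its profile $(z_i,v_i)_{i=1}^{q-1}$, where $z_i$ is the number of tuples taken from block $i$ and $v_i$ is the sum of their second coordinates. The global identities $\sum_{i} i z_i=\Delta r$ and $\sum_i v_i=\Delta t$ follow directly from the block structure, and since each chosen second coordinate of block $i$ lies in $[(i-1)k,\,ik-1]$ we get the elementwise bounds $(i-1)k z_i\le v_i\le (ik-1)z_i$. Thus every configuration has its profile in $\mathbb{W}_{q,k,\Delta r,\Delta t}$, and it remains to count, for each fixed profile, the number of configurations realizing it and then to sum over $\mathbb{W}_{q,k,\Delta r,\Delta t}$.

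For a fixed profile the count factors into two independent choices. Within block $i$, choosing the ordered list of its $z_i$ tuples with second-coordinate sum $v_i$ is, after subtracting the common offset $(i-1)k$ from each coordinate, the same as choosing $z_i$ values in $\{0,\dots,k-1\}$ summing to $v_i-(i-1)kz_i$; by Lemma~\ref{lmP} there are exactly $\#P_0(z_i,\,v_i-(i-1)kz_i)$ of these. Independently, the $Z=\sum_i z_i$ tuples of the full configuration are interleaved across the blocks, and the number of such interleavings equals the number of words over $[q-1]$ using letter $i$ exactly $z_i$ times, i.e. $(\sum_i z_i)!/\prod_i z_i!$. A routine bijection shows that an interleaving pattern together with the within-block ordered lists reconstructs the configuration uniquely and vice versa, so for a fixed profile the number of configurations is
\[
\frac{\left(\sum_{i=1}^{q-1} z_i\right)!}{\prod_{i=1}^{q-1} z_i!}\,\prod_{i=1}^{q-1}\#P_0\!\left(z_i,\, v_i-(i-1)k z_i\right),
\]
and summing over $\mathbb{W}_{q,k,\Delta r,\Delta t}$ yields the asserted identity.

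The step I expect to be the crux is getting the ordering bookkeeping exactly right, namely verifying that the decomposition ``record the block of each position, then record the ordered content of each block'' is a genuine bijection, so that the multinomial factor neither over- nor under-counts; this is precisely what forces the ordered (path) reading of the expansion rather than an unordered one. Secondary care is needed at the boundary profiles, where $z_i=0$ forces $v_i=0$ and one invokes the convention $\#P_0(0,0)=1$ for the empty list, and where extreme values of $v_i$ make an inner $\#P_0$ factor vanish; these degenerate cases are routine but should be stated explicitly so that the product formula remains valid across all of $\mathbb{W}_{q,k,\Delta r,\Delta t}$.
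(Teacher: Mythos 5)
Your proposal is correct and follows essentially the same route as the paper's own proof: both classify configurations by the profile $(z_i,v_i)_{i=1}^{q-1}$ of per-first-coordinate counts and second-coordinate sums, reduce each block to $\#P_0(z_i,\,v_i-(i-1)kz_i)$ after subtracting the offset $(i-1)k$, and multiply by the multinomial factor $\bigl(\sum_i z_i\bigr)!/\prod_i z_i!$ for the interleaving. Your explicit attention to the bijection and to the degenerate cases $z_i=0$ is a welcome addition but does not change the argument.
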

\begin{proof}
For each $i \in [q-1]$, let $z_i$ be the number of pairs with first coordinate $i$ chosen from $\mathcal{S}_k^{(q)}$.  For a fixed $i$, choosing a sequence of $z_i$ such pairs is equivalent to choosing an integer sequence $(y_1^{(i)}, y_2^{(i)}, \dots, y_{z_i}^{(i)})$ where each $y_j^{(i)} \in \{0\} \cup [k-1]$. Here, the second coordinate of the $j$-th pair is defined as $(i-1)k+y_j^{(i)}$. Thus, the sum of the first coordinates of these pairs is $iz_i$, and the corresponding sum of their second coordinates is given by 
\begin{align}
v_i &= \sum_{j=1}^{z_i} \big((i-1)k + y_j^{(i)} \big) \notag \\
&= (i-1)kz_i + \sum_{j=1}^{z_i} y_j^{(i)}, \label{eqVi}
\end{align}
where $0 \leq y_j^{(i)} \leq k-1$ for all $j \in [z_i]$. This mapping ensures that the number of ways to choose the sequence of pairs for  fixed $z_i$ and $v_i$ is equivalent to counting the integer sequences $(y_j^{(i)})_{j=1}^{z_i}$ satisfying $\sum_{j=1}^{z_i} y_j^{(i)} = v_i - (i-1)kz_i$ and $0 \le y_j^{(i)} \le k-1$, which is given by Lemma \ref{lmP} as
    $\# P_0(z_i, v_i - (i-1)kz_i)$.  Since pairs with different first coordinates are distinct, for a fixed $\big((z_1, v_1), (z_2,v_2),\dots,(z_{q-1},v_{q-1})\big)$, the total number of ways to choose the sequence of all pairs  is  \[\frac{(\sum_{i=1}^{q-1} z_i)!}{\prod_{i=1}^{q-1}z_i!}\prod_{i=1}^{q-1} \#P_0(z_i, v_i-(i-1)kz_i).\]

By the definition of $Q_0(\Delta r, \Delta t)$, we have
\begin{equation}\label{eqMain}
\begin{aligned}
    \sum_{i=1}^{q-1} iz_i&= \Delta r,\\ \sum_{i=1}^{q-1} v_i &= \Delta t .
\end{aligned}
\end{equation}
Futhermore, we determine the admissible range for $v_i$. Since $0 \le \sum_{j=1}^{z_i} y_j^{(i)} \leq (k-1)z_i$,  it follows from Equation  \eqref{eqVi} that for each $i \in [q-1]$,
\begin{align}
(i-1)kz_i \leq v_i \leq (ik-1)z_i. \label{eqCompare}
\end{align}
Combining the constraints in \eqref{eqMain} with the ranges in \eqref{eqCompare}, we observe that the collection of all valid sequences of pairs $\big((z_i, v_i)\big)_{i=1}^{q-1}$ is precisely the set $\mathcal{W}_{q,k,\Delta r, \Delta t}$ as defined in Lemma \ref{lmQ0}. Summing the number of solutions over all such sequences in $\mathcal{W}_{q,k,\Delta r, \Delta t}$ yields the desired expression for $\#Q_0(\Delta r, \Delta t)$. This completes the proof.  
\end{proof}

 We now consider the computation of  $\#Q(\Delta r, \Delta t)$.
    
\begin{lemma}\label{lmQ}
    \begin{align*}
        \#Q(\Delta r, \Delta t)=\sum_{l=0}^{\lfloor \frac{\Delta t}{(q-1)k} \rfloor} \sum_{\substack{\big((z_i,v_i)\big)_{i=1}^{q-1} \in\\ \mathcal{W}_{q,k,\Delta r-ql, \Delta t - (q-1)kl}}} \binom{l+\sum_{i=1}^{q-1} z_i}{l}  \frac{(\sum_{i=1}^{q-1} z_i)!}{\prod_{i=1}^{q-1}z_i!} \prod_{i=1}^{q-1} \#P_0(z_i, v_i-(i-1)kz_i).
    \end{align*}
    \end{lemma}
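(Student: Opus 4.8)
The plan is to decompose $\#Q(\Delta r,\Delta t)$ according to how many times the distinguished tuple $(q,(q-1)k)$ is used, and to reduce each piece to the count $\#Q_0$ already established in Lemma~\ref{lmQ0}. Recalling that $\#Q_l(\Delta r,\Delta t)$ counts the configurations using $(q,(q-1)k)$ exactly $l$ times, the first step is the trivial partition
\[
\#Q(\Delta r,\Delta t)=\sum_{l\geq 0}\#Q_l(\Delta r,\Delta t).
\]
Since each copy of $(q,(q-1)k)$ contributes $q$ to the first coordinate sum and $(q-1)k$ to the second, a configuration using it $l$ times can exist only when $(q-1)kl\leq\Delta t$, which pins the range to $0\leq l\leq\lfloor \Delta t/((q-1)k)\rfloor$; terms with $ql>\Delta r$ contribute nothing because the relevant set $\mathbb{W}_{q,k,\Delta r-ql,\Delta t-(q-1)kl}$ is then empty.

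For a fixed $l$, I would delete the $l$ copies of $(q,(q-1)k)$ from a configuration counted by $\#Q_l(\Delta r,\Delta t)$. What survives is built entirely from the non-distinguished tuples of $\mathbb{S}_k^{(q)}$, with coordinate sums $\Delta r-ql$ and $\Delta t-(q-1)kl$. These are exactly the objects enumerated inside the proof of Lemma~\ref{lmQ0}: grouping them by the counts $z_i$ of tuples with first coordinate $i$ and by their second-coordinate sums $v_i$, their number equals
\[
\sum_{\big((z_i,v_i)\big)_{i=1}^{q-1}\in\mathbb{W}_{q,k,\Delta r-ql,\,\Delta t-(q-1)kl}}\frac{\big(\sum_{i=1}^{q-1}z_i\big)!}{\prod_{i=1}^{q-1}z_i!}\prod_{i=1}^{q-1}\#P_0\big(z_i,\,v_i-(i-1)kz_i\big).
\]

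It then remains to reinsert the $l$ identical copies of $(q,(q-1)k)$. Because these copies are indistinguishable and the surviving non-distinguished part already carries a fixed internal order of length $\sum_{i=1}^{q-1}z_i$, inserting the $l$ copies amounts to choosing their positions among the $\big(\sum_{i=1}^{q-1}z_i\big)+l$ total slots, contributing the factor $\binom{l+\sum_{i=1}^{q-1}z_i}{l}$. Multiplying each term of the $\mathbb{W}$-sum by this factor gives the expression for $\#Q_l(\Delta r,\Delta t)$, and summing over the admissible range of $l$ yields the claimed formula. I expect the one genuinely delicate point to be this reinsertion step: the binomial factor depends on $\sum_{i=1}^{q-1}z_i$, which varies term by term over $\mathbb{W}$, so it cannot be pulled outside the inner sum, and one must argue carefully that interleaving $l$ indistinguishable copies of a single repeated tuple into an ordered sequence of length $\sum_{i=1}^{q-1}z_i$ contributes precisely $\binom{l+\sum_{i=1}^{q-1}z_i}{l}$ and creates no overcounting.
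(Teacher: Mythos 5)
Your proposal is correct and follows essentially the same route as the paper: partition $\#Q$ by the number $l$ of occurrences of the distinguished tuple $(q,(q-1)k)$, reduce the remaining configuration to the $\#Q_0(\Delta r-ql,\Delta t-(q-1)kl)$ count of Lemma~\ref{lmQ0}, and account for interleaving the $l$ indistinguishable copies via the factor $\binom{l+\sum_{i=1}^{q-1}z_i}{l}$ placed inside the inner sum. Your added remarks on the admissible range of $l$ and on why the binomial factor cannot be pulled outside the $\mathbb{W}$-sum are consistent with, and slightly more explicit than, the paper's argument.
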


\begin{proof}
    We begin by calculating $\#Q_l(\Delta r, \Delta t)$. If we first choose $l$ times the pair $(q,(q-1)k)$, we are left with $\#Q_0(\Delta r -ql, \Delta t -(q-1)kl)$ ways to select the remaining pairs. For a fixed sequence $\big((z_1,v_1),(z_2,v_2),\dots, (z_{q-1}, v_{q-1})\big) \in \\ \mathcal{W}_{q,k,\Delta r-ql, \Delta t - (q-1)kl}$,  we then have $\binom{l+\sum_{i=1}^{q-1} z_i}{l}$ ways to insert the $l$ copies of $(q,(q-1)k)$ inside the remaining pairs, and thus
    \[
    \#Q_l(\Delta r, \Delta t)=\sum_{\substack{\big((z_i,v_i)\big)_{i=1}^{q-1} \in\\ \mathcal{W}_{q,k,\Delta r-ql, \Delta t - (q-1)kl}}} \binom{l+\sum_{i=1}^{q-1} z_i}{l} \frac{(\sum_{i=1}^{q-1} z_i)!}{\prod_{i=1}^{q-1}z_i!} \prod_{i=1}^{q-1} \#P_0(z_i, v_i-(i-1)kz_i).
    \]
Summing over all possible values of $l$ yields the asserted result. 
\end{proof}

\begin{lemma}\label{lmb'}
    \begin{align*}
        b'(r,t) = \sum_{j=0}^{t} \#Q(r-\lfloor \frac{j}{k} \rfloor-1, t-j).
    \end{align*}
\end{lemma}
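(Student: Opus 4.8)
The plan is to unfold the recursion of Lemma~\ref{lm14} for $b'(r,t)$ completely, recording every term produced until all remaining $b'$-terms hit a boundary (Case (A)) and vanish. Since the recursion is an exact identity of the form $b'(\rho,\tau)=(\text{local term})+\sum_{\text{children}}b'(\cdot)$, unrolling it writes $b'(r,t)$ as the sum of the local terms over all nodes of the resulting expansion tree, which is finite because every decrement strictly lowers the first index. The only nonzero local term is the additive constant $1$ of Case (B); Cases (A) and (C) contribute $0$. Hence $b'(r,t)$ equals the number of nodes whose label $(\tilde r,\tilde t)$ satisfies the Case (B) condition $k(\tilde r-1)\le\tilde t\le k\tilde r-1$. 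For each $\tilde t\in\{0,\dots,t\}$ this holds for the unique $\tilde r=\lfloor \tilde t/k\rfloor+1$, so it suffices to count, for each such pair, how many times the label $(\tilde r,\tilde t)$ occurs, and then sum over $\tilde t$.

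Next I would identify each occurrence of a label $(\tilde r,\tilde t)$ with a root-to-node path, that is, with an ordered sequence of index decrements. Reading off Lemma~\ref{lm14}, a Case (C) node branches exactly along the decrements in $\mathbb{S}_{k}^{(q)}$, whereas a Case (B) node branches only along the ``non-boundary'' decrements $(j,(j-1)k+i)$ with $j\in[q-1]$, $i\in[k-1]$, i.e.\ along $\mathbb{S}_{k}^{(q)}$ with the boundary tuples $(a,(a-1)k)$, $a\in[q]$, removed. Thus the number of occurrences of $(\tilde r,\tilde t)$ is the number of such tree-paths, and the target identity reduces to showing this equals $\#Q(r-\tilde r,\,t-\tilde t)$, the number of ordered sequences drawn from the \emph{full} set $\mathbb{S}_{k}^{(q)}$ with $\sum r_j=r-\tilde r$ and $\sum t_j=t-\tilde t$.

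The crux is reconciling the full decrement set used in $\#Q$ with the smaller set available at Case (B) nodes, and I would settle it with the potential $\delta(\rho,\tau)=k\rho-\tau$. Every $(a,b)\in\mathbb{S}_{k}^{(q)}$ obeys $(a-1)k\le b\le ak-1$, so applying it lowers $\delta$ by $c=ak-b\in[1,k]$ and never raises $\tau$; the three cases are read off from $\delta$ (and the sign of $\tau$) as $\delta\le 0$ or $\tau<0$ for (A), $1\le\delta\le k$ for (B), and $\delta>k$, $\tau\ge 0$ for (C). As the target is a Case (B) node, $\delta$ falls from $kr-t$ to a value in $[1,k]$ in strictly positive steps while $\tau$ stays $\ge\tilde t\ge 0$; one checks directly that $\delta\le 0$ and $\tau<0$ are each preserved under all decrements, so Case (A) is absorbing and no sequence counted by $\#Q(r-\tilde r,t-\tilde t)$ ever meets a Case (A) node. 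Within such a sequence, at a Case (B) node ($\delta\in[1,k]$) applying a boundary tuple (for which $c=k$) would push $\delta$ to $\le 0$, i.e.\ into the absorbing Case (A), from which $(\tilde r,\tilde t)$ is unreachable; hence only non-boundary tuples occur at Case (B) nodes, matching the tree's branching, while at Case (C) nodes ($\delta>k$) all of $\mathbb{S}_{k}^{(q)}$ keeps $\delta>0$ and is admissible. This yields a bijection between the sequences counted by $\#Q(r-\tilde r,t-\tilde t)$ and the paths ending at $(\tilde r,\tilde t)$, so the occurrence count equals $\#Q(r-\tilde r,t-\tilde t)$. Summing over $\tilde t=0,\dots,t$ (writing $j=\tilde t$ and $\tilde r=\lfloor j/k\rfloor+1$) gives $b'(r,t)=\sum_{j=0}^{t}\#Q(r-\lfloor j/k\rfloor-1,\,t-j)$. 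The main obstacle is exactly this bijection: guaranteeing that the mismatch between the full and restricted decrement sets neither over- nor under-counts, which the potential argument resolves by showing that the ``extra'' full-set moves at Case (B) nodes all land in the absorbing Case (A) and are therefore automatically excluded.
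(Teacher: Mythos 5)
Your proposal is correct and follows essentially the same route as the paper's proof: unroll the recursion of Lemma~\ref{lm14}, observe that only the additive constant $1$ in Case (B) contributes, identify the $t+1$ contributing pairs $(\lfloor j/k\rfloor+1,\,j)$ for $0\leqslant j\leqslant t$, and count the decrement sequences reaching each as $\#Q(r-\tilde r,\,t-\tilde t)$. The only difference is that you supply, via the potential $\delta=k\rho-\tau$, a rigorous justification for the step the paper merely asserts — namely that counting sequences over the full set $\mathbb{S}_k^{(q)}$ is legitimate even though Case (B) nodes branch only over the non-boundary tuples, because any boundary move taken from a Case (B) state lands in the absorbing Case (A) and hence is automatically excluded from every sequence that reaches the target pair.
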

\begin{proof}
    As discussed above,  expanding $b'(r,t)$ reaches exactly $t+1$ pairs $(\tilde{r},\tilde{t})$ that satisfy $k(\tilde{r}-1) \leq \tilde{t} \leq k\tilde{r}$ and  $0 \leq \tilde{t} \leq t$ , each of which contributes to the sum. These are precisely the  pairs $(r_j, t_j) = (\lfloor \frac{j}{k} \rfloor+1,j)$ for $0 \leq j \leq t$, and each  is reached in exactly $\#Q(r-r_j,t-t_j)$ ways. Summing over all these contributions yields the result.
\end{proof}
\begin{corollary}\label{Corcomputeb}
    The combined results of  Lemmas \ref{lm13}, \ref{lm14},  \ref{lmP}, \ref{lmQ0}, \ref{lmQ}, and  \ref{lmb'} give an explicit expression for $|D_t(B_{r,k;q})|$. We restate the results here:  \\
When $q \geq r$,
    \begin{align*}
        b(r,t)&=
\begin{cases}
0, &\text{if $t<0$ or $t>kr$;}~\\
\sum\limits_{i=0}^{t}b(r-1,i), &\text{if $0\leq t\leq k$;}~\\
\sum\limits_{i=0}^{k}b(r-1,t-i), &\text{otherwise.}
\end{cases}
    \end{align*}
 When $q <r$,
 \begin{align*}
     b(r,t)=b'(r,t)+\sum_{i=1}^{q-1} b'(r-i,t-ik),
 \end{align*} 
 where  \begin{align*}
        b'(r,t) &= \sum_{j=0}^{t} \#Q(r-\lfloor \frac{j}{k} \rfloor-1, t-j),\\
        \#Q(\Delta r, \Delta t)&=\sum_{l=0}^{\lfloor \frac{\Delta t}{(q-1)k} \rfloor} \sum_{\substack{ \big((z_i,v_i)\big)_{i=1}^{q-1} \in \\ \mathbb{W}_{q,k,\Delta r-ql, \Delta t - (q-1)kl}}} \binom{l+\sum_{i=1}^{q-1} z_i}{l}  \frac{(\sum_{i=1}^{q-1} z_i)!}{\prod_{i=1}^{q-1}z_i!} \prod_{i=1}^{q-1} \#P_0(z_i, v_i-(i-1)kz_i),\\
\# P_0(\Delta r, \Delta t)& = \sum_{i=0}^{\lfloor \frac{\Delta t}{k}\rfloor} (-1)^i \binom{\Delta r}{i}  \binom{\Delta r + \Delta t - ik-1}{\Delta r - 1},
    \end{align*}
    and 
    \begin{align*}
     \mathcal{W}_{q,k,\Delta r-ql, \Delta t - (q-1)kl} =
\Big\{&\big((z_i,v_i)\big)_{i=1}^{q-1}:
\sum_{i=1}^{q-1}i z_i=\Delta r-ql, \sum_{i=1}^{q-1} v_i= \Delta t-(q-1)kl,\\& (i-1)kz_i \leq v_i \leq (ik-1)z_i, z_i,v_i \in \mathbb{Z}_{\geq 0}\Big\}.
    \end{align*}
\end{corollary}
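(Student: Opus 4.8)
The plan is to treat Corollary~\ref{Corcomputeb} as a direct synthesis of the preceding lemmas, assembling them in a bottom-up fashion, so that the proof is essentially a verification that the nested substitutions compose consistently. I would begin by separating the two regimes dictated by Lemma~\ref{lm13}. In the regime $q \geq r$, the first identity of Lemma~\ref{lm13} together with the boundary values recorded in Lemma~\ref{lm14} already furnishes the stated piecewise recursion for $b(r,t)$ verbatim; here nothing further is required beyond rewriting $|D_t(B_{r,k;q})|$ as $b(r,t)$ and invoking the $0 \leq t \leq k$ versus general-$t$ split that Lemma~\ref{lm14} established from Lemma~\ref{lm13}.

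In the regime $q < r$, I would first translate the second identity of Lemma~\ref{lm13}, namely $|D_t(B_{r,k;q})| = |D_t(B'_{r,k;q})| + \sum_{i=1}^{q-1} |D_{t-ik}(B'_{r-i,k;q})|$, into the shorthand $b(r,t) = b'(r,t) + \sum_{i=1}^{q-1} b'(r-i,t-ik)$ using the definitions of $b$ and $b'$. This reduces the task to exhibiting a closed form for $b'(r,t)$. For that I would invoke Lemma~\ref{lmb'}, which writes $b'(r,t) = \sum_{j=0}^{t} \#Q(r - \lfloor j/k \rfloor - 1, t-j)$, and then successively unfold $\#Q$ via Lemma~\ref{lmQ}, its component $\#Q_0$ via Lemma~\ref{lmQ0}, and $\#P_0$ via Lemma~\ref{lmP}, finally substituting the explicit description of the index set $\mathbb{W}_{q,k,\Delta r,\Delta t}$. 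Concatenating these substitutions yields the displayed system of formulas exactly as stated.

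The step I expect to require the most care is not any single identity but the bookkeeping that guarantees the nested summations compose without index clashes: specifically, checking that the decrement arguments $r-i$, $r-\lfloor j/k\rfloor-1$, and $\Delta r - q\ell$ remain in the range where the cited lemmas apply, and that the boundary cases, where some $b'$ argument hits $t<0$ or $t \geq kr$ and therefore vanishes (case $(A)$ of Lemma~\ref{lm14}), are consistently absorbed into the $\mathbb{W}$-constraints $(i-1)kz_i \leq v_i \leq (ik-1)z_i$. Since each constituent lemma has already been proven, the only genuine content remaining is confirming that their hypotheses are met at every level of the recursion, so the corollary follows by straightforward composition.
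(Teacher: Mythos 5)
Your proposal matches the paper exactly: the corollary is presented there as a direct assembly of Lemmas \ref{lm13}, \ref{lm14}, \ref{lmP}, \ref{lmQ0}, \ref{lmQ}, and \ref{lmb'}, with the $q\geqslant r$ case read off from the $b$-recursion and the $q<r$ case obtained by substituting \ref{lmb'} into \ref{lm13} and unfolding $\#Q$, $\#Q_0$, and $\#P_0$ in turn. Your added attention to the vanishing boundary cases and the $\mathbb{W}$-constraints is consistent with, if slightly more explicit than, the paper's treatment.
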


\begin{remark}
    When $q=2$, the expression for $|D_t(B_{r,k;q})|$ coincides with that given in Corollary  $5$ of \cite{Liron}. 
\end{remark}

We present an algorithm to determine the set $\mathcal{W}_{q,k,\Delta r, \Delta t}$, which proceeds in two stages via dynamic programming.
\begin{itemize}
\item For each integer $ \frac{k\Delta r - \Delta t}{k} \leq x \leq  k\Delta r - \Delta t$,  we first find all nonnegative integer sequences $Z=(z_1, z_2, \dots, z_{q-1})$ satisfying:
\begin{align*}
    \sum_{i=1}^{q-1} i z_i = \Delta r ~\text{and}~ \sum_{i=1}^{q-1} z_i = x.
\end{align*} Let $\mathcal{U}_{x}$ be the set of these solutions and define the total solution set $\mathcal{U}$ as the union over all admissible integer $x$, that is, $\mathcal{U}= \cup_{x \in I} \mathcal{U}_x$, where $I = \{x \in \mathbb{Z}: \frac{k \Delta r -\Delta t}{k} \leq x \leq k \Delta r -\Delta t\}$.
\item Then, for each such sequence $Z$, we determine all nonnegative integer sequences $V=(v_1, v_2, \dots, v_{q-1})$ such that:
\begin{align*}
\sum_{i=1}^{q-1} v_i = \Delta t,
\end{align*}
subject to the constraints $(i-1)k z_i \leq v_i \leq (ik -1)z_i$ for all $i \in [q-1]$.  The set of these solutions is denoted by $\mathcal{V}_Z$.
\end{itemize}

Accordingly, our desired set is given as follows:
\begin{align*}
    \mathcal{W}_{q,k,\Delta r, \Delta t}=\Big\{\big((z_i,v_i)\big)_{i=1}^{q-1}:Z=(z_1, z_2,\dots,z_{q-1}) \in \mathcal{U}, V =(v_1,v_2,\dots,v_{q-1}) \in \mathcal{V}_{Z} \Big\}.
\end{align*}

Using balanced strings, we have derived upper bounds on the number of distinct subsequences in general strings. A comparison between our bound in Corollary \ref{Corcomputeb} and previous bounds is shown in Figure 1. 

\begin{figure}[htbp]
\centering 
\includegraphics[height=6cm]{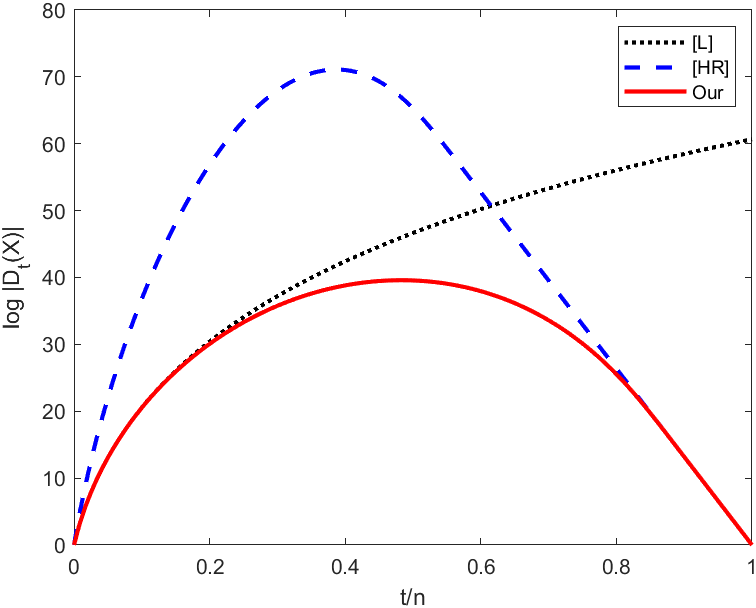}\caption{Comparison of upper bounds for the case $q=3$, $n=120$, $r=24$. Our upper bounds derived from balanced strings [Corollary \ref{Corcomputeb}] are compared against previous best known bounds. [L] marks the upper bound proven by Levenstein \cite{L0}. [HR] marks the upper bound proven by Hirschberg and Regnier \cite{Hirschberg}.  The results are presented as functions of $t$ on a logarithmic scale.}\label{fig:example}
\end{figure} 

\begin{theorem}
 For fixed $q$, $b(r,k,t;q)$ can be computed in polynomial time in $n$.
\end{theorem}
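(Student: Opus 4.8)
The plan is to compute $b(r,k,t;q)=|D_t(B_{r,k;q})|$ by dynamic programming directly on the recurrences of Corollary~\ref{Corcomputeb} (equivalently Lemmas~\ref{lm13} and~\ref{lm14}), rather than by evaluating the closed-form sums term by term. Throughout, recall $n=rk$, so $r,k,t\le n$. Moreover, since $B_{r,k;q}$ uses only $q_1=\min\{q,r\}$ distinct symbols (as observed before Theorem~\ref{thm2}), we may assume without loss of generality that $q\le r\le n$; in particular $kq\le kr=n$. The first thing I would establish is that the recursion is well-founded and that its set of reachable subproblems is small: in every term on the right-hand side of the Case~B and Case~C formulas of Lemma~\ref{lm14} the first index is at most $r-1$ (it is $r-j$ with $j\ge 1$, or $r-1-j$ with $j\ge 0$), and every index stays in the range $0\le r'\le r$ and $0\le t'\le n$. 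Hence only $O(n^2)$ distinct pairs $(r',t')$ can ever arise.

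For the case $q\ge r$ I would fill the table of $b(r',t')$ over $(r',t')\in[0,r]\times[0,n]$ using the three-way recurrence of Lemma~\ref{lm14}; each entry is a sum of at most $k+1$ earlier entries, giving $O(n^2)$ entries at $O(k)$ cost each (or $O(1)$ using running prefix sums in $t'$). For the case $q<r$ I would instead tabulate $b'(r',t')$ by memoizing the Case~A/B/C recurrence: each entry equals $0$, or $1$ plus the double sum $\sum_{i=1}^{k-1}\sum_{j=1}^{q-1}b'(r'-j,t'-jk+i)$, or that double sum together with $\sum_{j=0}^{q-1}b'(r'-1-j,t'-jk)$. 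In every case the entry is an explicit sum of at most $(k-1)(q-1)+q=O(kq)=O(n)$ previously computed values. With $O(n^2)$ subproblems and $O(n)$ additions per subproblem, the table is built in $O(n^3)$ arithmetic operations, after which $b(r,t)=b'(r,t)+\sum_{i=1}^{q-1}b'(r-i,t-ik)$ follows in $O(q)$ further operations.

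Since the entries require no operation beyond addition and comparison with the constant $1$, no binomial coefficients or factorials need be formed on this route. The integers stored are bounded by $|D_t(B_{r,k;q})|\le q^{n}$, hence have $O(n\log q)$ bits, so each addition takes time polynomial in $n$; multiplying the $O(n^3)$ additions by this per-operation cost keeps the total polynomial in $n$, establishing the theorem.

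The step needing the most care is not termination but the claim that the reachable subproblems number only $O(n^2)$: a naive unfolding of the recurrence of Lemma~\ref{lm14} yields a tree of exponential size, and it is the collapse of all reachable indices into the grid $[0,r]\times[0,n]$ that makes memoization decisive. If one preferred to evaluate the closed form of Corollary~\ref{Corcomputeb} directly, the corresponding obstacle would be bounding the cardinality of $\mathbb{W}_{q,k,\Delta r,\Delta t}$ and enumerating it; the two-stage dynamic program for $\mathbb{W}$ described above accomplishes this, but its running time is transparently polynomial only after reducing to $q\le r\le n$ as above, and the factorial and binomial factors must then be verified to be computable in polynomial time as well. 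The recurrence-based dynamic program sidesteps this combinatorial enumeration altogether, which is why I would adopt it as the primary argument.
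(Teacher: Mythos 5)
Your proof is correct, but it takes a genuinely different route from the paper's. The paper proves the theorem by bounding the cost of evaluating the closed-form expression of Corollary~\ref{Corcomputeb}: it accounts for building a table of binomial coefficients, enumerating the index set $\mathbb{W}_{q,k,\Delta r,\Delta t}$ via the two-stage dynamic program described before the theorem, and then tallying the calls to $\#P_0$, $\#Q$, and $b'$, arriving at an explicit (polynomial) operation count. You instead memoize the recurrences of Lemma~\ref{lm14} directly, observing that all reachable subproblems $(r',t')$ collapse onto an $O(n^2)$ grid, that each entry is a sum of $O(kq)=O(n)$ previously computed entries (after the legitimate reduction to $q\le r$, since $B_{r,k;q}$ lives over $\Sigma_{\min\{q,r\}}$), and that all intermediate integers have $O(n\log q)$ bits; this gives an $O(n^3)$-addition, hence polynomial-time, algorithm with no binomials or factorials at all. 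Your argument is more elementary and arguably cleaner as a proof of the stated theorem; what the paper's argument buys in exchange is a certification that the closed-form formula itself --- which is one of the paper's contributions --- can be evaluated efficiently, not merely that the quantity it represents can be computed by some other means. One minor point to tidy: the recurrences of Lemma~\ref{lm14} need explicit base values at $r'=0$ (namely $b(0,0)=b'(0,0)=1$ and $0$ for $t'>0$) for the table-filling to bottom out, but this is a triviality that affects both approaches equally.
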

\begin{proof}
 In the following analysis, we assume that for $m$ bit integers, an addition operation takes time $O(m)$, and a multiplication takes time $O(m^2)$.  Using dynamic programming, calculating $\binom{a}{b}$ requires $O(ab)$ such basic operations. For the calculation of $b(r,k,t;q)$, all required binomial coefficients include $a$ and $b$ that are of size $O(n)$. Dynamically building a table of all needed binomial coefficients requires $O(n^2)$  additions, and time $O(n^3)$.

We begin by analyzing the time complexity of the above algorithm for computing $\mathcal{W}_{q,k,\Delta r, \Delta t}$. 
\begin{itemize}
    \item In the first step, for a fixed $x$,  dynamically building a table of the state space requires time $O(\frac{qnx^2}{k}\cdot\frac{(x+q-2)^{q-2}}{(q-2)!})$ because each state stores all feasible partial solutions that satisfy its constraints. Since this process is repeated for all possible values of $x$ (ranging from $\frac{k\Delta r -\Delta t}{k}$ to $k\Delta r -\Delta t$), the overall time complexity for the first step becomes $ O(\frac{qn(n-t)^3}{k}\cdot \frac{(n-t+q)^{q-2}}{(q-2)!})$.
    \item Similarly, in the second step, for a fixed sequence $Z$, dynamically building a table of the state space requires time $O(qtn\frac{(t+q-2)^{q-2}}{(q-2)!})$.  Since this process is repeated for all possible sequences $Z$ (at most $\binom{n-t+q-2}{q-2}$), the overall time  complexity for the second step is  $O(qtn\frac{(t+q)^{q-2}(n-t+q)^{q-2}}{((q-2)!)^2})$.
\end{itemize}
  Hence, the overall time comlexity for the entire algorithm is $A:=O(\frac{qn(n-t)^3}{k}\cdot\frac{(n-t+q)^{q-2}}{(q-2)!}+qtn\frac{(t+q)^{q-2}(n-t+q)^{q-2}}{((q-2)!)^2})$.

 Now we count the calculations needed.
\begin{itemize}
    \item $\#P_0(\Delta r, \Delta t)$ requires at most $\frac{2t}{k}$ values of binomial coefficients, and $\frac{2t}{k}$ multiplication and addition operations, taking time $O(n^2\frac{t}{k})$.
    \item For a fixed $l$ and a fixed sequence $\big((z_i,v_i)\big)_{i=1}^{q-1} \in \mathcal{W}_{q,k,\Delta r-ql,\Delta t -(q-1)kl}$, the computation of the corresponding term in $\#Q(\Delta r, \Delta t)$ requires $q-1$ calls to $\#P_0$,  one value of binomial coefficients, and at most $2(q+\sum_{i=1}^{q-1}z_i)$  multiplication operations, taking time $B:=O((q\frac{t}{k}+2q+2n-2t)n^2)$. Then $\#Q(\Delta r, \Delta t)$ takes time $C:=\frac{t}{qk}\big(A+B(n-t)\frac{[(n-t+q)(t+q-2)]^{q-2}}{((q-2)!)^2}\big)$.
    \item $b'(r,k,t;q)$ requires $t+1$ calls to $\#Q$ and $t$ additions, taking time $O(tC)$.
\end{itemize}

All together, when $q < r$, computing $b(r,k,t;q)$ takes time $O(D):=O(n^3+\frac{t^2}{k}\big(\frac{qn(n-t)^3}{k}\cdot\frac{(n-t+q)^{q-2}}{(q-2)!}+qtn\frac{(t+q)^{q-2}(n-t+q)^{q-2}}{((q-2)!)^2}+ (q\frac{t}{k}+2q+2n-2t)n^2(n-t)\frac{[(n-t+q)(t+q-2)]^{q-2}}{((q-2)!)^2}\big));$ when $q \geq r$, computing $b(r,k,t;q)$ takes time $O(\max\{t,k\}(q-\frac{n}{k})D)$.
\end{proof}
\subsection{Asymptotic analysis}
Following the approach in \cite{Liron}, we introduce  deletion patterns for $q$-ary strings in order to   quantify the multiplicative gap between our upper bound $|D_t(B_{r,k;q})|$ and the previous bounds.
\begin{definition}
Let $X$ be a string  with $X=S(x_1, \dots, x_r;a_1,\dots, a_r)$. A deletion pattern of size $t$  for $X$ is an $r$-tuple $(y_1,\dots,y_r)$ of nonnegative integers such that
$\sum_{i=1}^r y_i=t$
 and  $y_i \in [0,x_i]$ for all $0 \leq i \leq r$. Here, $y_i$ denotes the number of characters deleted from the $i$-th run of $X$. For example, applying the deletion pattern $(1,0,2) $ to the string $2220011111$ yields  the subeseqence $2200111$. Let $\mathcal{P}_t(X)$ denote the set of all deletion patterns of size $t$ for X.
\end{definition}
Note that distinct deletion patterns for a string may result in the same subsequence.  For example,  for the string $020011$, the deletion patterns $(1,1,0,0)$ and $(0,1,1,0)$ both yield the subsequence $0011$. Since the binary balanced strings defined in \cite{Liron} and our $q$-ary balanced strings share a similar structure,  we establish analogous results in the following. Specifically, by adapting the techniques used in the proofs of Lemma 20 and Corollary 7 in \cite{Liron}, we obtain Lemma \ref{RfL1} and Corollary \ref{RFL2}.
\begin{lemma}\label{RfL1}
    \begin{align*}
        |D_t(B_{r,k;q})| \le |\mathcal{P}_t(B_{r,k;q})|,
    \end{align*}
    where $|\mathcal{P}_t(B_{r,k;q})| = \sum_{i=0}^{\lfloor \frac{t}{k+1} \rfloor} (-1)^i \binom{r}{i} \binom{r+t-i(k+1)-1}{r-1}$.
    
\end{lemma}

\begin{corollary}\label{RFL2}
    \[
    |D_t(B_{r,k;q})| \leq  \min \left \{\binom{r+t-1}{t},(k+1)^r \right\}.
    \]
\end{corollary}

\begin{lemma}
    For sufficiently small $\epsilon > 0$, let $t=n(\frac{1}{2}-\epsilon)$ and $n=rk$. Then the upper bound of  $\min \left \{\binom{r+t-1}{t},(k+1)^r \right\}$ improves upon the previous bounds $\sum_{i=0}^t \binom{n-t}{i} D_{q-1}(t,t-i)$ and $\binom{r+t-1}{t}$ by a multiplicative factor of at least $\frac{1}{12k\sqrt{r}}c^r$, for some constant $c >1$.
\end{lemma}
\begin{proof}
    Our result follows from the proofs of Lemmas 22 and 23 in \cite{Liron}, together with the observation that $\sum_{i=0}^t \binom{n-t}{i} D_{q-1}(t,t-i)  \ge \sum_{i=0}^t \binom{n-t}{i}$ for   $q \ge 2$.
\end{proof}

Overall, we demonstrate that for values $t$ close to $n/2$ and for sufficiently large $r$ and $k$, $|\mathcal{P}_t(B_{r,k;q})|$ (and hence our upper bound of $|D_t(B_{r,k;q})|$) improves on the bounds from  \cite{Hirschberg} and \cite{L0} by an exponential multiplicative factor of $2^{\Omega(r)}$. Our results are illustrated in Figure \ref{fig:example2}.

\begin{figure}[htbp]
\centering 
\includegraphics[height=6cm]{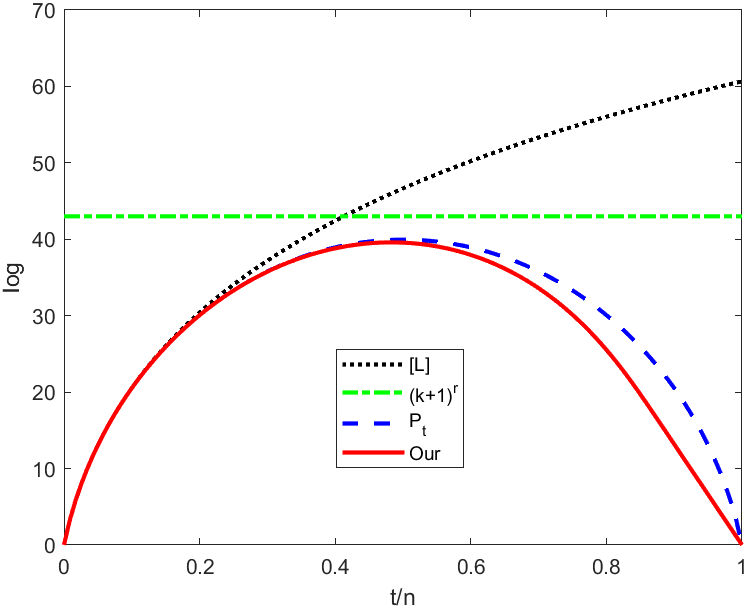}\caption{Our upper bound for $|D_t(B_{r,k;q})|$ from Corollary \ref{Corcomputeb} and the bound $\mathcal{P}_t(B_{r,k;q})$ compared to the upper bounds of [2] (marked [L]), and to the combinatorial bound $(k+1)^r$. Our example is computed for $n=120$ and $r=24$ as a function of $t$ (in logarithmic scale).}\label{fig:example2}
\end{figure} 

\section{Conclusion}
\label{sec6}
In this paper, we study the number of subsequences obtained from $q$-ary deletion channels, where $q\geq 2$. For our lower bound, we use the reduction operation to derive a lower bound for general $q$-ary strings from the known lower bound for binary unbalanced strings. For our upper bound, we first use the cyclic operation on any $q$-ary string $U=S(x_1,\ldots,x_r;a_1,\ldots,a_r)$ to obtain the block cyclic $q$-ary string $S(x_1,\ldots,x_r)=S(x_1,\ldots,x_r;0,1,\ldots,r~\bmod q)$ which is a generalization of  the binary string $S(x_1,\ldots,x_r;0,1,\ldots,r~\bmod 2)$. 
Further, we extend the balancing operation on $S(x_1,\ldots,x_r;0,1,\ldots,r\bmod 2)$, and prove that the balanced $r$-run string $B_{r,k;q}$ has the maximum number of subsequences among all  $q$-ary strings of length $n=rk$ with $r$ runs. Furthermore, we derive a universal upper bound for arbitrary $n$ and $r$, which significantly improves the best-known results. Moreover, we give a recursive expression for the exact number of subsequences of a balanced string $B_{r,k;q}$ and present a closed-form formula for its evaluation. Finally, through an analysis of deletion patterns, we establish an asymptotic improvement over previous bounds.  While our work concentrates on the scenario where $r \mid n$, the complete characterization of extremal strings for the case $r\nmid n$  remains open.

\section*{Acknowledgment}
The work of X.~Wang is supported by the National Natural Science Foundation of China (Grant No. 12001134). The work of F.-W.~Fu is supported by the National Key Research and Development Program of China (Grant No. 2022YFA1005000), the National Natural Science Foundation of China (Grant No.  62371259),  the Fundamental Research Funds for the Central Universities of China (Nankai University), and the Nankai Zhide Foundation.

\end{document}